\documentclass[twocolumn]{autart}    

\usepackage{graphicx}          
\usepackage{color}
\usepackage{amsmath,amsfonts}
\usepackage{algorithmic}
\usepackage{array}
\usepackage[round]{natbib}
\newtheorem{theorem}{Theorem}[section]
\newtheorem{lemma}[theorem]{Lemma}
\newtheorem{remark}{Remark}[section]
\newtheorem{corollary}[theorem]{Corollary}

\newtheorem{definition}{Definition}[section]

\begin{document}
	
	\begin{frontmatter}
		
		\title{Differentially Private Consensus for Time-Delay Multi-agent Systems\thanksref{footnoteinfo}} 
		
		\thanks[footnoteinfo]{This paper was not presented at any IFAC meeting. 
			Corresponding author: Ji-Feng Zhang.}
		
		\author[CUG_SFT]{Ming-Yu Wang}\ead{wangmingyu@cug.edu.cn},
		\author[CUG_ALL]{Xiaofeng Zong}\ead{zongxf@cug.edu.cn},
		\author[USTB]{Jimin Wang}\ead{jimwang@ustb.edu.cn},
		\author[ZUT_CAS]{Ji-Feng Zhang}\ead{jif@iss.ac.cn}
		
		\address[CUG_SFT]{School of Future Technology, China University of Geosciences, Wuhan 430074, China}
		\address[CUG_ALL]{School of Artificial Intelligence and Automation, China University of Geosciences, and Hubei Key Laboratory of Advanced Control and Intelligent Automation for Complex Systems, Wuhan 430074, China}
		\address[USTB]{School of Automation and Electrical Engineering, University of Science and Technology Beijing, and Key Laboratory of Knowledge Automation for Industrial Processes, Ministry of Education, Beijing 100083, China}
		\address[ZUT_CAS]{School of Automation and Electrical Engineering, Zhongyuan University of Technology, Zhengzhou 450007, and State Key Laboratory of Mathematical Sciences, Academy of Mathematics and Systems Science, Chinese Academy of Sciences, Beijing 100190, China}

		\begin{keyword}                           
			Multi-agent systems; Time-delay; Differential privacy; Mean
			square consensus; Almost sure consensus.             
		\end{keyword}                             

		\begin{abstract}                          
	This paper is concerned with  the differentially private consensus problem for discrete-time
	multi-agent systems with communication delays. The purpose of the paper is to achieve differentially private consensus for such
	systems while protecting the entire delayed initial histories  of all agents.
	A novel adjacency relation for delayed histories is introduced, and a
	Laplace-noise-based privacy mechanism is developed, where the noise variance
	is allowed to vary with time and even increase. By using the difference
	resolvent function method, decay estimates for the fundamental solutions of
	the  delayed difference equations are derived. Based on these
	estimates and a backstepping technique, mean square weak consensus, mean
	square strong consensus, and almost sure strong consensus are established.
	The estimates for the fundamental solutions are also used to derive an explicit sensitivity bound.
	Furthermore, a constructive parameter design is provided
	to achieve a prescribed infinite-horizon
	$\epsilon^\star$-differential privacy level. Numerical simulations illustrate
	the theoretical results.
		
		\end{abstract}

	\end{frontmatter}
	
	\section{Introduction}
	
	Multi-agent systems (MASs) have attracted sustained attention due to their wide
	applications in cooperative control, sensor networks, smart grids,
	distributed optimization, and autonomous vehicle coordination (\cite{OlfatiSaber2007,RenBeardAtkins2007,CaoYuRenChen2013,wang2026distributed}).
	Consensus is a fundamental task in these applications. In practical
	networked systems, however, communication delays commonly arise from network
	congestion, limited bandwidth, and asynchronous updates (\cite{OlfatiMurray2004,BlimanFerrariTrecate2008,li2026adaptive}). Meanwhile,
	the exchanged data may contain sensitive information, such as initial
	positions, local observations, private preferences, or local decisions, which
	should not be inferred by “honest-but-curious” agents or eavesdropper (\cite{nozari2017differentially,liu2020differentially,erfenwjm}). Therefore,
	it is important to study consensus under both communication delays and
	privacy-preservation requirements.
	
	Consensus of MASs has been extensively studied for both
	continuous-time and discrete-time protocols under fixed and switching
	topologies (\cite{JadbabaieLinMorse2003,OlfatiMurray2004,Moreau2005,OlfatiSaber2007}). When communication delays are involved, the analysis
	becomes more challenging, and various tools, including
	Lyapunov--Krasovskii functionals, frequency-domain methods, and graph
	theory, have been developed to characterize the delay effects (\cite{OlfatiMurray2004,BlimanFerrariTrecate2008,LiuLiuXieZhang2011,zong2019consensus}).
	However, these works mainly focus on convergence and stability, without
	quantifying the privacy leakage risks of exchanged data from the perspective
	of differential privacy.
	
	Differential privacy provides a rigorous and quantifiable framework for
	protecting sensitive information. It has been introduced into multi-agent
	consensus by injecting properly designed random noises into exchanged
	information or local updates. Early studies on differentially private average consensus for discrete-time
	systems were reported in \cite{huang2012differentially}. Subsequently, the
	fundamental privacy--accuracy trade-off and the impossibility of exact average
	consensus under differential privacy were revealed in
	\cite{nozari2017differentially}. For continuous-time heterogeneous systems, a
	differentially private output average consensus algorithm was proposed in
	\cite{liu2020differentially}. Event-triggered and resilient variants were
	further investigated in
	\cite{gao2019differentially,fiore2019resilient}. More recently,
	differentially private consensus has been extended to signed networks,
	resilient bipartite consensus, dynamic topologies, and improved
	privacy--accuracy trade-offs
	(\cite{erfenwjm,tian2025privacy,zhou2025dynamic,wang2024improved}).

	Despite these advances, most existing differentially private consensus results
are developed under delay-free communication conditions and typically regard
a single initial state as the private data. For delayed systems, however, the
evolution depends not only on $x_i(0)$ but also on the entire initial history of $x_i$
over $[-d,0]$. Therefore, protecting only the single initial
state is insufficient, and the whole delayed initial history should be
regarded as the private data to be protected. A key difficulty in the privacy analysis is to characterize how differences between adjacent private datasets affect the subsequent system evolution. Therefore, to derive an upper bound on the sensitivity, we first need to establish decay estimates for the fundamental solutions of the corresponding delayed difference equations.
Meanwhile, communication delays and privacy noises jointly affect the consensus
dynamics, making the convergence analysis more challenging. These issues
remain insufficiently explored in the existing literature.

Motivated by these observations, this paper studies differentially private
consensus for discrete-time multi-agent systems with communication delays.
We derive decay estimates for the fundamental solutions of delayed difference
equations and use them as a common basis for the convergence analysis and the
sensitivity analysis for differential privacy. Building on these estimates,
we establish mean square consensus, almost sure consensus, and an
infinite-horizon $\epsilon^\star$-differential privacy guarantee.
	
	The main contributions of this work are outlined as follows.	
	\begin{itemize}
		\item For the first time, this paper achieves differentially private
		consensus for MASs subject to communication delays. To protect the entire
		delayed historical information of each agent, a novel adjacency definition is
		introduced to characterize the difference between two delayed history
		segments. Based on this definition, rigorous differential privacy for
		time-delay systems is guaranteed through the injection of Laplace noise.
		
		\item A privacy-preserving mechanism with time-varying noise variance is
		developed. It is shown that the proposed algorithm can achieve both
		differential privacy and consensus even when the privacy noise increases
		over time. This relaxes the restrictions in
		\cite{huang2012differentially,nozari2017differentially,fiore2019resilient,
			gao2019differentially,tian2025privacy},
		where the noise variance is required to be constant or decaying. Moreover,
		the framework also covers the result in \cite{erfenwjm} as a special case
		when $d=0$.
		
		\item By combining the difference resolvent function method with a
		backstepping technique, we establish, for the first time, mean square and almost sure
		consensus for the proposed delayed discrete-time stochastic system over general directed graphs containing a
		spanning tree. This provides a new
		convergence analysis framework for delayed systems driven by additive
		noises. Even without considering differential privacy, the proposed proof
		technique is of independent interest.
		
		\item A design method is provided to achieve a prescribed
		$\epsilon^\star$-differential privacy level over the infinite time horizon.
		Specifically, the noise scale can be explicitly selected according to the
		given privacy budget $\epsilon^\star$, so that the proposed mechanism
		satisfies the predefined privacy requirement while preserving consensus.
	\end{itemize}
	
	The remainder of this paper is organized as follows.
	Section~\ref{sec:prelim} gives the problem formulation and preliminaries.
	Section~\ref{sec:convergence} presents the convergence analysis.
	Section~\ref{sec:privacy} establishes the differential privacy analysis for
	delayed initial histories. Section~\ref{sec:simulation} provides numerical
	simulations. Section~\ref{sec:conclusion} concludes the paper.
	
	Throughout this paper, 
	$\mathbb{R}^n$ and $\mathbb{R}^{m \times n}$ denote the $n$-dimensional Euclidean space and the set of $m \times n$ real matrices, respectively.
	$I_n$ represents the $n \times n$ identity matrix, and $\mathbf{1}_N$ denotes an $N$-dimensional column vector with all entries being ones.
	For a square matrix $M$, $\lambda(M)$ denotes its eigenvalues, $\rho(M)$ its spectral radius, and $\text{diag}\{m_1, \dots, m_n\}$ a diagonal matrix with diagonal elements $m_i$.
	The superscript $\top$ denotes the transpose of a vector or a matrix.
	For a vector $x \in \mathbb{R}^n$, $\|x\|_1 = \sum_{i=1}^n |x_i|$ is the $L_1$-norm, and $\|x\|$ denotes the standard Euclidean norm ($L_2$-norm).
	$\text{Lap}(0, b)$ represents the zero-mean Laplace distribution with scale parameter $b > 0$.
	The notation $\mathbb{R}_{>0}$ denotes the set of positive real numbers.
	For any $k \le d$, an empty sum $\sum_{s=0}^{k-d-1} c(s)$ is conventionally understood as zero.
	
	\section{Preliminaries  and problem formulation}\label{sec:prelim}
	
	This section formulates the discrete-time privacy-preserving consensus problem  in the presence of communication delays and privacy noises. We first introduce some graph-theoretic notations and a basic decomposition lemma for the Laplacian matrix. Then, we present the network dynamics, definitions of consensus, and the differential privacy framework. Finally, we introduce the fundamental solutions for delayed difference equations, which will be used in the subsequent convergence and privacy analysis.
	
	\subsection{Graph theory}
	Let $\mathcal G=(\mathcal V,\mathcal E,A)$ be a weighted directed graph, where $\mathcal V=\{1,2,\dots,N\}$ is the node set, $\mathcal E\subseteq \mathcal V\times \mathcal V$ is the edge set, and $A=[a_{ij}]\in\mathbb R^{N\times N}$ is the adjacency matrix. An edge $(j,i)\in\mathcal E$ means that Agent $i$ can receive information from Agent $j$, that is, $a_{ij}>0$; otherwise, $a_{ij}=0$. We assume that $a_{ii}=0$ for all $i\in\mathcal V$. For Agent $i$, its neighbor set is denoted by
	$
	\mathcal N_i:=\{j\in\mathcal V:\, a_{ij}>0\},
	$
	and its in-degree is defined as
	$
	\kappa_i:=\sum_{j=1}^N a_{ij}.
	$
	Let $D:=\mathrm{diag}(d_1,\dots,d_N)$. The Laplacian matrix associated with $\mathcal G$ is
	$
	L:=D-A.
	$ Throughout this paper, the eigenvalues of the Laplacian matrix $L$ are
	denoted by $\lambda_i$, $i=1,\dots,N$, where $\lambda_1=0$.
	
	The following lemma summarizes several standard properties of the Laplacian matrix (see \cite{liduan2010}).
	
	\begin{lemma}\label{lem:lap_decomp}
		For the Laplacian matrix $L$ associated with $\mathcal G$, the following statements hold:
		\begin{enumerate}
			\item There exists a probability vector $\pi\in\mathbb R^N$ such that
			$
			\pi^\top L=0$ and $ \pi^\top\mathbf 1_N=1.
			$
			\item There exists a nonsingular matrix $Q\in\mathbb R^{N\times N}$ such that
			\begin{equation*}
				Q^{-1}LQ=
				\begin{pmatrix}
					0 & 0\\
					0 & L_e
				\end{pmatrix},
			\end{equation*}
			where $L_e\in\mathbb R^{(N-1)\times (N-1)}$.
			\item The graph $\mathcal G$ contains a spanning tree if and only if all eigenvalues of $L_e$ have positive real parts. In this case, the vector $\pi$ in item (1) is unique.
		\end{enumerate}
	\end{lemma}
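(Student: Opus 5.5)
The plan is to derive all three items from the facts that $L$ has zero row sums and that $D-L=A$ is entrywise nonnegative. (Here $D$ should be read as $\mathrm{diag}(\kappa_1,\dots,\kappa_N)$, the in-degree matrix.)

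\textbf{Item (1).} Since $L\mathbf 1_N=0$, $L$ is singular, so there is a nonzero $v$ with $v^\top L=0$. To obtain a nonnegative such vector, pick $s>\max_i\kappa_i$ and set $P:=I_N-L/s$. This $P$ is row stochastic and nonnegative. By Perron--Frobenius, or by taking a limit point of the Ces\`aro averages $\frac1T\sum_{t<T}P^t$, $P$ has a nonnegative left eigenvector $\pi$ for eigenvalue $1$. Normalize it so that $\pi^\top\mathbf 1_N=1$. Then $\pi^\top P=\pi^\top$, which is equivalent to $\pi^\top L=0$.

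\textbf{Item (2).} Take $Q=[\mathbf 1_N,\ Q_2]$, where the columns of $Q_2$ form a basis of $\ker\pi^\top$. Since $\pi^\top\mathbf 1_N=1\neq0$, $Q$ is nonsingular. The inverse is $Q^{-1}=\begin{pmatrix}\pi^\top\\ R\end{pmatrix}$ for some $R$ satisfying $R\mathbf 1_N=0$ and $RQ_2=I_{N-1}$.

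Now compute $Q^{-1}LQ$ block by block:
\begin{itemize}
\item the first column is $Q^{-1}L\mathbf 1_N=0$;
\item the first row is $\pi^\top LQ=0$;
\item the remaining block is $L_e:=RLQ_2$.
\end{itemize}
This gives the required block-diagonal form. It also shows that the spectrum of $L$ equals $\{0\}\cup\lambda(L_e)$, counted with multiplicity.

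\textbf{Item (3).} By Gershgorin, every eigenvalue of $L$ lies in $\bigcup_i\{z:|z-\kappa_i|\le\kappa_i\}$. Hence every nonzero eigenvalue has positive real part. So all eigenvalues of $L_e$ have positive real parts if and only if $0$ is a simple eigenvalue of $L$. It therefore suffices to prove that $0$ is an algebraically simple eigenvalue of $L$ exactly when $\mathcal G$ has a spanning tree. This is the main step.

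\emph{Spanning tree implies simplicity.} First show $\ker L=\mathrm{span}\{\mathbf 1_N\}$ by a maximum principle. Let $Lx=0$ with $x$ real, and let $M=\max_i x_i$. Each $x_i$ is a weighted average of its in-neighbours' values, so if $x_i=M$ then $x_j=M$ for every $j\in\mathcal N_i$. Following edges backward from any node reaches the root. Applying the same argument to $-x$ shows that the maximum and the minimum both equal $x_{\rm root}$, so $x$ is constant.

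Next exclude a Jordan block. Suppose $Ly=\mathbf 1_N$. Then $Py=y-\mathbf 1_N/s$, so $P^ty=y-t\mathbf 1_N/s$. This is unbounded in $t$, which contradicts $\|P^t\|_\infty=1$. Hence $0$ is algebraically simple.

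\emph{No spanning tree implies multiplicity.} Consider the condensation of $\mathcal G$. Without a spanning tree there are at least two source strongly connected components $C_1,C_2$, i.e.\ components receiving no edges from outside. Two independent kernel vectors can be built: solve with value $1$ on $C_1$ and $0$ on $C_2$, then extend by absorption probabilities of the random walk associated with $P$. This gives $\dim\ker L\ge2$.

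\emph{Uniqueness of $\pi$.} When $0$ is simple, $\ker L^\top$ is one-dimensional. The normalization $\pi^\top\mathbf 1_N=1$ then fixes $\pi$ uniquely.

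I expect the Jordan-block exclusion and the construction of the second kernel vector to be the delicate parts. Since the paper cites \cite{liduan2010}, these could also be quoted as standard results.
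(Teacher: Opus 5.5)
Your proof is correct. It necessarily takes a different route from the paper, because the paper gives no proof of Lemma~\ref{lem:lap_decomp}: it simply quotes the three items as standard facts from \cite{liduan2010}.

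Your argument is self-contained and runs as follows:
\begin{itemize}
\item a Perron--Frobenius or Ces\`aro argument on $P=I_N-L/s$ produces $\pi$;
\item the explicit choice $Q=[\mathbf 1_N,\ Q_2]$ with $Q^{-1}=\begin{pmatrix}\pi^\top\\ R\end{pmatrix}$ gives item (2);
\item Gershgorin places every nonzero eigenvalue of $L$ in the open right half-plane, which reduces item (3) to algebraic simplicity of $0$;
\item a maximum principle along in-neighbour chains back to the root gives $\ker L=\mathrm{span}\{\mathbf 1_N\}$;
\item power-boundedness of $P$ excludes a Jordan block;
\item two source components of the condensation handle the converse.
\end{itemize}

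Beyond being self-contained, this buys two things the citation does not. First, it pins down conventions the paper leaves implicit. $D$ must be $\mathrm{diag}(\kappa_1,\dots,\kappa_N)$, and the root reaches every node along edges $(j,i)$, so chains of in-neighbours lead back to it. Second, your $Q^{-1}$ has first row $\pi^\top$. Item (2) as stated only asserts existence of some nonsingular $Q$, but the proof of Theorem~\ref{thm:dt_ms_weak} silently needs exactly your structure. It writes $(Q^{-1}\otimes I_n)\delta(k)=[0,\ \bar\delta^\top(k)]^\top$, and the first block vanishes because $\pi^\top(I_N-J_N)=0$.

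Two minor remarks. Your power-boundedness argument works verbatim with any $v\in\ker L$ in place of $\mathbf 1_N$. So $0$ is always semisimple, and the spanning tree is needed only for $\dim\ker L=1$.

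The converse, which you only sketch via absorption probabilities, is correct as sketched. The source components are exactly the closed classes of the walk driven by $P$, and the hitting probabilities of $C_1$ are $P$-harmonic. It can, however, be done more cheaply. The rows of $L$ indexed by a source component $C$ vanish outside $C$, and $L_{CC}$ is itself a Laplacian. Item (1) applied to $L_{CC}$ therefore yields a left null vector of $L$ supported on $C$. Two such vectors with disjoint supports give $\dim\ker L^\top\ge2$, hence $\dim\ker L\ge2$.
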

	
	Lemma~\ref{lem:lap_decomp} allows us to separate the consensus mode from the disagreement modes. As will be seen later, this decomposition reduces the network dynamics to a family of scalar delayed stochastic difference equations.

	\subsection{Problem formulation}
	
	We consider a MAS with $N$ discrete-time first-order
	integrators:
	\begin{equation}\label{eq:system}
		x_i(k+1)=x_i(k)+u_i(k),\hspace{0.1cm}
		i=1,2,\dots,N,\hspace{0.1cm} k\in\mathbb Z_{\ge 0},
	\end{equation}
	where $x_i(k)\in\mathbb R^n$ and $u_i(k)\in\mathbb R^n$ denote the state
	and control input of Agent $i$, respectively.
	
	For delayed discrete-time consensus, a commonly used protocol takes the
	form (\cite{OlfatiMurray2004}):
	\begin{equation}\label{eq:nominal_protocol}
		u_i(k)
		=
		c(k)\sum_{j=1}^{N}a_{ij}
		\bigl(x_j(k-d)-x_i(k-d)\bigr),
	\end{equation}
	where $d\in\mathbb Z_{\ge0}$ is the communication delay and $c(k)\ge0$ is a
	scalar gain sequence. This protocol drives each agent by using delayed
	relative state information. However, in an open communication environment,
	the transmitted delayed states may be observed by eavesdroppers or
	honest-but-curious agents. As a result, private information contained in the
	initial states or delayed histories may be inferred from the exchanged data.
	
	To address this issue, we introduce a privacy-preserving communication
	mechanism. Instead of directly broadcasting $x_j(k-d)$, Agent $j$ releases
	a perturbed message
	\begin{equation}\label{eq:theta}
		\theta_j(k)=x_j(k-d)+w_j(k),
	\end{equation}
	where $w_j(k)\in\mathbb R^n$ is a privacy noise vector. The entries of
	$w_j(k)$ are independent Laplace random variables:
	$
	w_j^{(\ell)}(k){\sim}\mathrm{Lap}(0,b_j(k)),
	\ell=1,\dots,n,
	$
	where $b_j(k)>0$ is the scale parameter.
	
	Based on the received message, Agent $i$ computes the perturbed relative
	state information
	\begin{equation}\label{eq:z_def}
		\begin{aligned}
			z_{ji}(k)
			&=\theta_j(k)-x_i(k-d) \\
			&=x_j(k-d)-x_i(k-d)+w_j(k),
			j\in\mathcal N_i .
		\end{aligned}
	\end{equation}
	The privacy-preserving distributed protocol is then given by
	\begin{equation}\label{eq:protocol}
		u_i(k)=c(k)\sum_{j=1}^N a_{ij}z_{ji}(k),
		\quad i=1,2,\dots,N.
	\end{equation}
	
	We now state the consensus notions used in this paper.
	\begin{definition}\label{def:dt_ms}
		The system \eqref{eq:system} under the protocol \eqref{eq:protocol} is said
		to achieve \emph{mean square weak consensus} if, for any initial condition
		and any $i\neq j$,
		$
		\lim_{k\to\infty}\mathbb E\|x_i(k)-x_j(k)\|^2=0.
		$
		It is said to achieve \emph{mean square strong consensus} if there exists
		a random vector $x^*\in\mathbb R^n$ with $\mathbb E\|x^*\|^2<\infty$ such
		that
		$
		\lim_{k\to\infty}\mathbb E\|x_i(k)-x^*\|^2=0,
		i=1,2,\dots,N.
		$
	\end{definition}
	
	\begin{definition}\label{def:dt_as}
		The system \eqref{eq:system} under the protocol \eqref{eq:protocol} is said
		to achieve \emph{almost sure strong consensus} if there exists a random
		vector $x^*$ with $\mathbb P(\|x^*\|<\infty)=1$ such that
		$
		\mathbb P\!\Bigl(\lim_{k\to\infty}\|x_i(k)-x^*\|=0\Bigr)=1,
		i=1,2,\dots,N.
		$
	\end{definition}
	
	Next, we formalize the privacy requirement. For Agent $i$, define its
	delayed initial history by
	$
	H_i:=\{x_i(r)\}_{r=-d}^{0}.
	$
	The global delayed initial history is denoted by
	$
	H:=(H_1,\dots,H_N)\in\mathcal H
	$
	with
	$
	\mathcal H:=\mathcal H_1\times\cdots\times\mathcal H_N.
	$
	This delayed history is regarded as the private data to be protected.
	
	For a given time horizon $T\ge1$, define the observation by
	$
	\Theta^T:=\{\theta(k)\}_{k=0}^{T}
	$
	with
	$
	\theta(k):=\operatorname{col}(\theta_1(k),\dots,\theta_N(k)).
	$
	Then $\Theta^T\in(\mathbb R^{nN})^{T+1}$, and the global privacy mechanism is
	$
	\mathcal M^T:\mathcal H\to(\mathbb R^{nN})^{T+1},
	\mathcal M^T(H)=\Theta^T.
	$
	
	\begin{remark}
		The delayed initial history $\mathcal H$ is regarded as the private data
		to be protected for the following reasons. In distributed networks, such
		as robotic networks and opinion dynamics, the initial states often encode
		sensitive information, including initial positions, private preferences,
		or local opinions. Moreover, due to the communication delay $d$, the early
		evolution of the system depends on the whole delayed trajectory
		$\{x_i(r)\}_{r=-d}^{0}$ rather than only on the single state $x_i(0)$.
		Hence, protecting only $x_i(0)$ is not sufficient. Protecting the entire
		delayed history is necessary to prevent adversaries or curious neighbors
		from inferring an agent's private initial information from the transmitted
		data.
	\end{remark}
	
	With the protected data specified, we introduce adjacent delayed histories
	and the corresponding differential privacy definition.
	
	\begin{definition}
		\label{def:global_adj_history}
		Given $\Delta_H>0$, two global histories
		$
		H^a=(H_1^a,\dots,H_N^a),
		H^b=(H_1^b,\dots,H_N^b)
		$
		are said to be $\Delta_H$-adjacent, denoted by $H^a\sim H^b$, if there
		exists an agent $i_0\in\{1,\dots,N\}$ such that
		\[
		H_i^a=H_i^b,\quad i\neq i_0,
		\]
		and
		\begin{equation}\label{eq:global_adj_history}
			\sum_{r=-d}^{0}
			\|x_{i_0,a}(r)-x_{i_0,b}(r)\|_1
			\le \Delta_H.
		\end{equation}
	\end{definition}
	
	\begin{definition}
		\label{def:global_history_dp}
		Given $\Delta_H>0$, the mechanism $\mathcal M^T$ is said to be
		$\varepsilon(T)$-differentially private if
		\[
		\mathbb P\{\mathcal M^T(H^a)\in\mathcal O\}
		\le
		e^{\varepsilon(T)}
		\mathbb P\{\mathcal M^T(H^b)\in\mathcal O\}
		\]
		holds for any two $\Delta_H$-adjacent histories $H^a\sim H^b$ and any
		measurable set
		$
		\mathcal O\subseteq(\mathbb R^{nN})^{T+1}.
		$
	\end{definition}
	
	Before presenting the main convergence and privacy results, we first develop
	some preliminary estimates for delayed difference equations. Since the state
	evolution depends on a finite segment of past states, both delayed initial
	histories and accumulated noises require careful treatment. The fundamental
	solution introduced below provides an explicit representation of the solution
	and serves as a key tool for the subsequent analysis.

	\subsection{Fundamental solutions for delayed difference equations}
	
	To facilitate the subsequent convergence and privacy analyses, we introduce a scalar delayed stochastic difference equation, which serves as the foundational building block for our theoretical framework:
	\begin{equation}\label{eq:dt_scalar}
		y(k+1)=y(k)-c(k)\lambda\,y(k-d)+c(k)\zeta(k),
	\end{equation}
	where $\lambda\in\mathbb C$ satisfies $\Re(\lambda)>0$, and $\{\zeta(k)\}$ is a zero-mean noise sequence with a uniformly bounded second moment.
	
	The associated homogeneous equation is
	\begin{equation}\label{eq:dt_hom}
		y(k+1)=y(k)-c(k)\lambda\,y(k-d).
	\end{equation}
	For each initial index $r\in\{-d,\dots,0\}$, we define the history fundamental solution $\Gamma_h(\cdot,r)$ as
	\begin{equation}\label{eq:Gamma_history_h}
		\Gamma_h(k,r)=
		\begin{cases}
			1, & k=r,\\
			0, & k\neq r,
		\end{cases}
		\quad -d\le k\le 0,
	\end{equation}
	and extend it forward by
	\begin{equation}\label{eq:Gamma_recursion_h}
		\Gamma_h(k+1,r)=\Gamma_h(k,r)-c(k)\lambda\,\Gamma_h(k-d,r),
		\hspace{0.1cm} k\ge 0.
	\end{equation}
	Then, for any initial history $\{y(-d),\dots,y(0)\}$, the solution of \eqref{eq:dt_hom} can be represented as
	\begin{equation}\label{eq:dt_hom_rep}
		y(k)=\sum_{r=-d}^{0}\Gamma_h(k,r)y(r),\quad k\ge 0.
	\end{equation}
	
	To describe the state evolution from time $t$ to $k$, we define the two-parameter fundamental solution $\Gamma(\cdot,t)$ for each $t\ge 0$ by
	\begin{equation}\label{eq:Gamma_two_init}
		\Gamma(k,t)=0,\quad k<t,\quad \Gamma(t,t)=1,
	\end{equation}
	and
	\begin{equation}\label{eq:Gamma_two_rec}
		\Gamma(k+1,t)=\Gamma(k,t)-c(k)\lambda\,\Gamma(k-d,t),
		\quad k\ge t.
	\end{equation}
	Note that this discrete resolvent function is a scalar, time-varying form of the fundamental solution introduced in \cite{DiblikKhusainov:06}.
	Using these notations, the variation-of-constants formula for \eqref{eq:dt_scalar} is given by
	\begin{equation}\label{eq:dt_voc_two}
		y(k)=\hspace{-0.1cm}\sum_{r=-d}^{0}\Gamma_h(k,r)y(r)
		+\sum_{s=0}^{k-1}\Gamma(k,s+1)c(s)\zeta(s),
		\hspace{0.1cm} k\ge 0.
	\end{equation}
	
	The following two lemmas provide key exponential decay estimates for the fundamental solutions.
	
	\begin{lemma}
		\label{lem:dt_fundamental_exp_A}
		Suppose there exists an integer $k_0\ge 0$ such that
		\begin{equation}\label{eq:cbar_A}
			\bar c=\sup_{k\ge k_0}c(k)<\infty,
			\alpha=2\Re(\lambda)-(2d+1)|\lambda|^2\bar c>0.
		\end{equation}
		Choose $\varrho>0$ sufficiently small such that
		\begin{equation}\label{eq:h_varrho_condition}
			h(\varrho)=
			-\alpha
			+
			\varrho e^{\varrho\bar c}C_0
			+
			\varrho C_1 d\bar c^2 e^{\varrho(d+1)\bar c}<0,
		\end{equation}
		where $	C_0=2(1+|\lambda|\bar c)^2+d|\lambda|^2\bar c^2$ and $C_1=3d|\lambda|^2$.
		Then, there exist positive constants $C$ and $C_h$, depending only on
		$(\lambda,d,\bar c)$, the chosen $\varrho$, and the initial segment
		$\{c(0),\dots,c(k_0+d)\}$, such that
		\begin{equation}\label{eq:Gamma_exp_A}
			|\Gamma(k,t)|^2
			\le
			C\exp\!\left(
			-\varrho\sum_{i=t}^{k-1}c(i)
			\right),
			\qquad k>t\ge 0,
		\end{equation}
		and, for $k\ge 0$ and $r\in\{-d,\dots,0\}$,
		\begin{equation}\label{eq:Gammah_exp_A}
			|\Gamma_h(k,r)|^2
			\le
			C_h\exp\!\left(
			-\varrho\sum_{i=0}^{k-1}c(i)
			\right).
		\end{equation}
	\end{lemma}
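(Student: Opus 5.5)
We plan a Lyapunov--Krasovskii-type argument applied directly to the scalar recursion \eqref{eq:Gamma_two_rec}, in the variable $g(k):=\Gamma(k,t)$ with $t$ fixed; the same argument handles $\Gamma_h(\cdot,r)$. The first step is to rewrite the delayed term through the identity
\[
g(k-d)=g(k)-\sum_{j=k-d}^{k-1}\bigl(g(j+1)-g(j)\bigr)=g(k)+\lambda\sum_{j=k-d}^{k-1}c(j)\,g(j-d),
\]
valid once $k-d\ge t$ (so that the recursion applies on the whole window). This gives the neutral form
\[
g(k+1)=(1-c(k)\lambda)g(k)-c(k)\lambda^2\sum_{j=k-d}^{k-1}c(j)g(j-d).
\]
Squaring and using $|1-c\lambda|^2=1-2c\Re\lambda+c^2|\lambda|^2$ together with Young/Cauchy--Schwarz inequalities on the cross and sum terms yields, for $k\ge k_0+d$,
\[
|g(k+1)|^2\le |g(k)|^2-c(k)\bigl(2\Re\lambda-|\lambda|^2\bar c\bigr)|g(k)|^2+c(k)\,|\lambda|^2\bar c\sum_{j=k-2d}^{k-1}(\cdots)|g(j)|^2 ,
\]
with a distributed-delay remainder whose total weight is of order $2d|\lambda|^2\bar c$. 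This accounts for the $(2d+1)$ factor in $\alpha$.

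The second step is to absorb the delayed terms with a Krasovskii functional
\[
V(k)=|g(k)|^2+\sum_{m}\sum_{j}\mu\,c(\cdot)\,|g(j)|^2 ,
\]
a double sum over windows of length up to $2d$, chosen so that $V(k+1)-V(k)\le -\alpha c(k)|g(k)|^2+\text{higher order}$. The weight $\mu$ is proportional to $|\lambda|^2\bar c$, and this is exactly where the $C_0$ and $C_1 d\bar c^2$ terms arise.

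Next we pass to the weighted functional $W(k)=e^{\varrho S(k)}V(k)$ with $S(k)=\sum_{i=t}^{k-1}c(i)$. Using $e^{\varrho c(k)}-1\le \varrho c(k)e^{\varrho\bar c}$ and bounding the memory part of $V$ by $e^{\varrho(d+1)\bar c}$ times the current window sum, we obtain
\[
W(k+1)-W(k)\le c(k)e^{\varrho S(k+1)}\,h(\varrho)\,|g(k)|^2\le 0
\]
once \eqref{eq:h_varrho_condition} holds. Hence $W$ is nonincreasing for $k\ge t+k_0+2d$, which gives $|g(k)|^2\le V(k)\le W(k_*)e^{-\varrho S(k)}$.

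The remaining step is the transient. On the finite range before $\max(t,k_0)+2d$, the recursion with bounded gains gives $|g(k)|\le\prod(1+|\lambda|c(i))$. For indices below $k_0$ this depends only on the initial segment $\{c(0),\dots,c(k_0+d)\}$, while above $k_0$ it is bounded by $(1+|\lambda|\bar c)^{2d+1}$ uniformly in $t$. The factor $e^{\varrho S}$ over these finitely many steps is bounded in the same way, which produces constants $C$ and $C_h$ independent of $t$.

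For $\Gamma_h$ we apply the identical argument with $t=0$, initial data the unit history, and the initial window handled via $\{c(0),\dots,c(k_0+d)\}$.

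The main obstacle is the bookkeeping in the second step: choosing the functional weights so that the delayed cross terms are absorbed with precisely the constants $C_0$ and $C_1$. This is complicated by the time-varying $c(j)$ in the memory terms, which must be bounded by $\bar c$ while keeping $c(k)$ as a factor in the decrement. We would try inserting $c(j)$ inside the functional, i.e. $\sum c(j)|g(j)|^2$-type terms, so that increments telescope into multiples of $c(k)$.
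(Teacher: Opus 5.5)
Your identity $g(k-d)=g(k)+\lambda\sum_{j=k-d}^{k-1}c(j)g(j-d)$ is correct. However, the step you call the main obstacle is a genuine gap, and inserting $c(j)$ into the functional does not close it.

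In your neutral form the dissipation is $-2\Re(\lambda)c(k)|g(k)|^2$. The cross term, by contrast, couples $g(k)$ with $g(i)$, $i\in[k-2d,k-d-1]$, through the coefficient $c(k)c(i+d)$. After Young's inequality, $|g(i)|^2$ is therefore charged at the steps $k\in[i+d+1,i+2d]$. A Krasovskii functional of the kind you describe ($|g(k)|^2$ plus weights on past values matched to these charges) must admit $|g(k)|^2$ into its memory at step $k$ with weight of order $|\lambda|^2c(k+d)\sum_{l=k+d+1}^{k+2d}c(l)$. The only dissipation available at that step is $\alpha c(k)|g(k)|^2$, so the entry cost is not $O(c(k))$.

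The lemma assumes nothing beyond $\sup_{k\ge k_0}c(k)<\infty$, so $c(k)=0<c(k+d)c(k+d+1)$ is allowed. In that case $g(k+1)=g(k)$, the older weights do not decrease (their charges carry the factor $c(k)=0$), and $V(k+1)-V(k)>0$ whenever $g(k)\ne0$. This contradicts your target $V(k+1)-V(k)\le-\alpha c(k)|g(k)|^2$. It also breaks the exponential weighting, where every positive term must carry the gain that drives the decay. Your plan would work for nonincreasing gains, but not in the stated generality.

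The missing idea is to put the energy on the state that $g(k)$ drives. The paper uses $q(k)=\gamma(k)-\lambda\sum_{j=k-d}^{k-1}c(j+d)\gamma(j)$, which in fact equals $\gamma(k+d)$ for $k\ge t$. Then $q(k+1)=q(k)-\lambda c(k+d)\gamma(k)$ has no delay, and the expansion of $q(k)$ is used only inside the $O(c^2)$ cross term. Every occurrence of $|\gamma(j)|^2$ then carries the same gain $c(j+d)$: in the dissipation $-\alpha c(k+d)|\gamma(k)|^2$, in $W(k)=|\lambda|^2\sum_{j=k-d}^{k-1}(j-k+d+1)c^2(j+d)|\gamma(j)|^2$, and in the weight $S_t(k)=\sum_{i=t}^{k-1}c(i+d)$. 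Returning from $\sum c(i+d)$ to $\sum c(i)$ at the end costs only a constant factor.

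Two further points would need repair even for constant gains.

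\emph{Constants.} Your claim that your functional yields exactly $C_0$, $C_1$ and $h$ is unsupported. Your memory reaches back $2d$ steps, and your expansion carries the term $c(k)^2|\lambda|^4|\Sigma(k)|^2$ with $\Sigma(k)=\sum_{j=k-d}^{k-1}c(j)g(j-d)$. The paper never has to absorb this term, and estimated with $|1-c(k)\lambda|\le1$ it leads to a condition stricter than $\alpha>0$. Since the lemma asserts the rate for every $\varrho$ satisfying this particular $h(\varrho)<0$, different constants would prove a different statement.

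\emph{Stepwise monotonicity.} The inequality $e^{\varrho S(k+1)}V(k+1)-e^{\varrho S(k)}V(k)\le c(k)e^{\varrho S(k+1)}h(\varrho)|g(k)|^2$ is not available. The term $(e^{\varrho c(k)}-1)V(k+1)$ contains past-window contributions that are not dominated by $|g(k)|^2$. The paper obtains only the aggregate bound $\Phi_t(\ell)\le\Phi_t(m(t))$, by summing over $k$ and interchanging the order of summation in $D_\ell$.
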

	
	\begin{remark}
		The condition on $\varrho$ in Lemma~\ref{lem:dt_fundamental_exp_A}
		is always feasible. Indeed, by the  condition
		$\alpha>0$, we have
		$
		h(0)=-\alpha<0.
		$
		Since $h(\cdot)$ is continuous with respect to $\rho$, there must exist
		a sufficiently small constant $\varrho>0$ such that
		$
		h(\varrho)<0.
		$
		Therefore, the required $\varrho$ in
		\eqref{eq:h_varrho_condition} always exists.
	\end{remark}
	
	\begin{lemma}\label{lem:dt_kernel_toeplitz}
		Let
		$
		a_{k,s}:=|\Gamma(k,s+1)|\,c(s), 0\le s\le k-1.
		$
		Under conditions $\sum_{k=0}^{\infty} c(k) = \infty$ and \eqref{eq:cbar_A}, the array $\{a_{k,s}\}$ satisfies
		$
		a_{k,s}\to 0$  for each fixed $s$, $
		\sup_{k\ge 1}\sum_{s=0}^{k-1}a_{k,s}<\infty.
		$
		Consequently, for any deterministic sequence $\{u(s)\}$ with $u(s)\to 0$, one has
		$
		\sum_{s=0}^{k-1}\Gamma(k,s+1)c(s)u(s)\to 0.
		$
	\end{lemma}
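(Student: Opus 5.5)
The plan is to derive everything from the exponential estimate \eqref{eq:Gamma_exp_A} of Lemma~\ref{lem:dt_fundamental_exp_A}, which holds under \eqref{eq:cbar_A}, and then apply a Toeplitz-type argument. Taking square roots gives
\begin{equation*}
a_{k,s}=|\Gamma(k,s+1)|\,c(s)\le \sqrt{C}\,c(s)\exp\Bigl(-\tfrac{\varrho}{2}\sum_{i=s+1}^{k-1}c(i)\Bigr),\quad k>s+1,
\end{equation*}
while for $k=s+1$ we have $\Gamma(s+1,s+1)=1$, so $a_{s+1,s}=c(s)$.

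\textbf{Pointwise decay.} Fix $s$. Since $\sum_k c(k)=\infty$, the sum $\sum_{i=s+1}^{k-1}c(i)$ tends to $\infty$ as $k\to\infty$. Hence the exponential tends to $0$, and so does $a_{k,s}$, because $c(s)$ is a fixed finite number.

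\textbf{Uniform bound on row sums.} Write $S_j:=\sum_{i=0}^{j-1}c(i)$. First split off the finitely many indices $s<k_0$. For these, $c(s)\le \max_{s<k_0}c(s)$, and each exponential factor is at most $1$, so their contribution is bounded by a constant. For $s\ge k_0$ we have $c(s)\le\bar c$. I plan to compare the sum to an integral through the elementary inequality
\begin{equation*}
c(s)\,e^{-\frac{\varrho}{2}(S_k-S_{s+1})}\le e^{\varrho\bar c/2}\,c(s)\,e^{-\frac{\varrho}{2}(S_k-S_{s})}\le \frac{2e^{\varrho\bar c}}{\varrho}\Bigl(e^{-\frac{\varrho}{2}(S_k-S_{s+1})}-e^{-\frac{\varrho}{2}(S_k-S_{s})}\Bigr).
\end{equation*}
The second inequality follows from $x\le e^{x}-1$ applied with $x=\tfrac{\varrho}{2}c(s)$, together with $e^{\varrho c(s)/2}\le e^{\varrho\bar c/2}$. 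Summing over $s$ from $k_0$ to $k-1$, the right-hand side telescopes to at most $\tfrac{2e^{\varrho\bar c}}{\varrho}$. Adding the bound for $s<k_0$ gives $\sup_k\sum_s a_{k,s}<\infty$. This telescoping estimate is the main technical step. The only delicate point is the index shift between $S_{s+1}$ and $S_s$, and the bound $c(s)\le\bar c$ handles it.

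\textbf{Consequence (Toeplitz lemma).} Let $u(s)\to0$ and fix $\eta>0$. Put $M:=\sup_k\sum_s a_{k,s}$ and choose $s_1$ such that $|u(s)|<\eta$ for $s\ge s_1$. Then
\begin{equation*}
\Bigl|\sum_{s=0}^{k-1}\Gamma(k,s+1)c(s)u(s)\Bigr|\le \sum_{s<s_1}a_{k,s}|u(s)|+\eta M.
\end{equation*}
The first term on the right is a finite sum, and each $a_{k,s}\to0$ by the pointwise decay. So the $\limsup$ of the left-hand side is at most $\eta M$. Since $\eta$ is arbitrary, the limit is $0$.
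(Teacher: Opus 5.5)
Your proposal is correct and follows essentially the same route as the paper. Both arguments take the square root of the exponential bound from Lemma~\ref{lem:dt_fundamental_exp_A}, split the row sum at $k_0$, bound $c(s)e^{-\frac{\varrho}{2}\sum_{i=s+1}^{k-1}c(i)}$ by a multiple of the difference of consecutive exponentials so that the sum telescopes, and finish with the Toeplitz lemma. The differences are cosmetic: you use $x\le e^x-1$ (costing a factor $e^{\varrho\bar c}$) where the paper uses $1-e^{-u}\ge u/(1+u)$, you treat the diagonal term $k=s+1$ explicitly, and you prove the Toeplitz step inline instead of citing it.
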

	
	With the above preliminaries in place, we are ready to study the asymptotic behavior of the closed-loop system and its differential privacy properties. The next section first establishes mean square consensus and then studies the almost sure consensus.

	\section{Consensus}\label{sec:convergence}
	In this section, we investigate the consensus behavior of the delayed
	privacy-preserving algorithm. We begin with mean square weak consensus, and
	then establish mean square strong consensus and almost sure strong consensus
	under stronger summability conditions.

	\begin{theorem}\label{thm:dt_ms_weak}
		Assume that the communication graph contains a spanning tree. Suppose that
		\eqref{eq:cbar_A} holds for each nonzero eigenvalue $\lambda_i$,
		$i=2,\dots,N$, of $L$. For each $\lambda_i$, let
		$\varrho_{\lambda_i}>0$ be chosen according to
		Lemma~\ref{lem:dt_fundamental_exp_A} with $\lambda=\lambda_i$, and set
		$
		\varrho_0:=\min_{2\le i\le N}\varrho_{\lambda_i}>0.
		$
		If the following condition holds:
		\[
		\textup{(C1)}\hspace{-0.05cm}
		\sum_{k=0}^{\infty}c(k)\hspace{-0.05cm}=\hspace{-0.05cm}\infty,\hspace{0.05cm}
		\lim_{k\to\infty}\hspace{-0.05cm}
		\sum_{s=0}^{k-1}
		e^{-\varrho_0\sum_{r=s+1}^{k-1}c(r)}
		\hspace{-0.05cm}c(s)^2\bar b(s)^2\hspace{-0.05cm}=\hspace{-0.05cm}0,
		\]
		then the system \eqref{eq:system} under  the protocol \eqref{eq:protocol}
		achieves mean square weak consensus.
	\end{theorem}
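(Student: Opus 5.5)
We plan to rewrite the closed loop in stacked form, decouple it with Lemma~\ref{lem:lap_decomp}, and then bound each disagreement mode with the fundamental-solution estimates of Lemma~\ref{lem:dt_fundamental_exp_A}.

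\textbf{Stacked form.} Let $x(k)=\operatorname{col}(x_1(k),\dots,x_N(k))$ and $w(k)=\operatorname{col}(w_1(k),\dots,w_N(k))$. Substituting \eqref{eq:z_def} into \eqref{eq:protocol} gives
\[
x(k+1)=x(k)-c(k)(L\otimes I_n)x(k-d)+c(k)(A\otimes I_n)w(k),
\]
since $\sum_j a_{ij}w_j(k)$ is the $i$th block of $(A\otimes I_n)w(k)$.

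\textbf{Reduction to scalar equations.} Apply the change of variables $\tilde x(k)=(Q^{-1}\otimes I_n)x(k)$ from Lemma~\ref{lem:lap_decomp}. The first block is the consensus mode. The remaining $N-1$ blocks satisfy a delayed system driven by $L_e$ and by a noise $\xi(k)=(E\,Q^{-1}A\otimes I_n)w(k)$, where $E$ selects the last $N-1$ rows.

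Since the graph contains a spanning tree, item (3) of Lemma~\ref{lem:lap_decomp} gives $\Re(\lambda_i)>0$ for $i\ge2$. Take a Jordan (or Schur) triangularization $L_e=PJP^{-1}$. Each coordinate then satisfies a scalar equation of the form \eqref{eq:dt_scalar} with $\lambda=\lambda_i$. For a diagonalizable $L_e$ the forcing is a linear combination of the $w_j(k)$. In a nontrivial Jordan block, the forcing of a coordinate also contains the coupling term $-c(k)y_{\mathrm{next}}(k-d)$ from the next coordinate in that block; this is the backstepping structure, which we handle from the last coordinate of each block upward.

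Mean square weak consensus holds if every disagreement coordinate $y$ satisfies $\mathbb E|y(k)|^2\to0$. Indeed, $x_i-x_j$ is a fixed linear combination of these coordinates, because the columns of $Q$ beyond the first span a complement of $\mathbf 1_N$.

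Let $\bar b(s)=\max_j b_j(s)$, so that $\mathbb E|w_j^{(\ell)}(s)|^2=2b_j(s)^2\le 2\bar b(s)^2$.

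\textbf{Last coordinate of each block.} Here the forcing is pure noise $\zeta(s)$ with $\mathbb E|\zeta(s)|^2\le K\bar b(s)^2$. Use \eqref{eq:dt_voc_two}. The homogeneous part is bounded by \eqref{eq:Gammah_exp_A} and tends to zero because $\sum c=\infty$.

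For the stochastic part, independence of the noise across time and zero mean kill the cross terms, so
\[
\mathbb E\Bigl|\sum_{s=0}^{k-1}\Gamma(k,s+1)c(s)\zeta(s)\Bigr|^2\le K\sum_{s=0}^{k-1}|\Gamma(k,s+1)|^2c(s)^2\bar b(s)^2 .
\]
By \eqref{eq:Gamma_exp_A}, this is at most $KC\sum_{s=0}^{k-1} e^{-\varrho_0\sum_{r=s+1}^{k-1}c(r)}c(s)^2\bar b(s)^2$, using $\varrho_{\lambda_i}\ge\varrho_0$. (When $s=k-1$ we have $\Gamma=1$, which is consistent with the empty sum.) This tends to zero by (C1).

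\textbf{Backstepping up each Jordan block.} Write the next coordinate's equation with forcing $\zeta(s)-y_{\mathrm{next}}(s-d)$. The deterministic-looking coupling term enters through the variation-of-constants formula as $\sum_s\Gamma(k,s+1)c(s)y_{\mathrm{next}}(s-d)$.

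Its second moment is controlled by Cauchy--Schwarz weighted with $a_{k,s}=|\Gamma(k,s+1)|c(s)$:
\[
\mathbb E\Bigl|\sum_s a_{k,s}\,|y_{\mathrm{next}}(s-d)|\Bigr|^2\le\Bigl(\sup_k\sum_s a_{k,s}\Bigr)\sum_s a_{k,s}\,\mathbb E|y_{\mathrm{next}}(s-d)|^2 .
\]
The sequence $u(s)=\mathbb E|y_{\mathrm{next}}(s-d)|^2$ is deterministic and tends to zero by the previous step. Lemma~\ref{lem:dt_kernel_toeplitz} (Toeplitz) then sends the right-hand side to zero. Adding the noise part via $(a+b)^2\le2a^2+2b^2$ finishes the induction.

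\textbf{Expected obstacles.} The main difficulty is handling complex $\lambda_i$ and non-diagonalizable $L_e$ cleanly. For complex eigenvalues we work with $|\cdot|^2$ and $\mathbb E|\cdot|^2$ of complex scalars, and the estimates of Lemma~\ref{lem:dt_fundamental_exp_A} are stated for complex $\lambda$, so they apply directly. Condition \eqref{eq:cbar_A} only holds for $k\ge k_0$, but this is already absorbed into the constants $C$ and $C_h$ of that lemma.
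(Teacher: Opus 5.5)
Your proposal is correct and follows essentially the same route as the paper's proof: stacked form, Laplacian decomposition, Jordan coordinates, the variation-of-constants formula with the bounds of Lemma~\ref{lem:dt_fundamental_exp_A} and (C1) for the last coordinate of each block, and backstepping up each block via the weighted Cauchy--Schwarz inequality and the Toeplitz argument of Lemma~\ref{lem:dt_kernel_toeplitz}. The one difference is cosmetic: you apply $Q^{-1}$ to $x(k)$ directly, whereas the paper applies it to the projected disagreement $\delta(k)=((I_N-\mathbf 1_N\pi^\top)\otimes I_n)x(k)$. This is fine, but the reason $x_i-x_j$ depends only on the last $N-1$ blocks is that the first column of $Q$ lies in $\ker L=\mathrm{span}\{\mathbf 1_N\}$ (from $LQ=Q\,\mathrm{diag}(0,L_e)$), not merely that the remaining columns span a complement of $\mathbf 1_N$.
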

	
	\textbf{Proof.}
	To write the closed-loop dynamics compactly, define
	$
	x(k):=[x_1(k)^\top,\dots,x_N(k)^\top]^\top\in\mathbb R^{nN}
	$
	and
	$
	w(k):=[w_1(k)^\top,\dots,w_N(k)^\top]^\top\in\mathbb R^{nN}.
	$
	Then, the overall system can be written as
	\begin{equation}\label{eq:stack_system}
		x(k+1)
		=
		x(k)-c(k)(L\otimes I_n)x(k-d)
		+c(k)(A\otimes I_n)w(k).
	\end{equation}
	
	To analyze consensus, let
	$
	J_N:=\mathbf 1_N\pi^\top
	$
	and
	$
	\delta(k):=\bigl((I_N-J_N)\otimes I_n\bigr)x(k).
	$
	Since $(I_N-J_N)L=L$ and $L(I_N-J_N)=L$, multiplying
	\eqref{eq:stack_system} by $(I_N-J_N)\otimes I_n$ gives
	\begin{equation}\label{eq:delta_dynamics}
		\delta(k+1)
		=
		\delta(k)-c(k)(L\otimes I_n)\delta(k-d)
		+c(k)\tilde\eta(k),
	\end{equation}
	where
	$
	\tilde\eta(k):=\bigl((I_N-J_N)\otimes I_n\bigr)(A\otimes I_n)w(k).
	$
	Thus, consensus can be studied through the asymptotic behavior of the
	disagreement process $\delta(k)$. Define $
	\delta_e(k):=(Q^{-1}\otimes I_n)\delta(k)=\begin{bmatrix}
		0 & \bar\delta^\top(k)
	\end{bmatrix}^\top,
	$
	where $\bar\delta(k)\in\mathbb R^{n(N-1)}$. Then, by Lemma~\ref{lem:lap_decomp}, \eqref{eq:delta_dynamics} becomes
	\begin{equation}\label{eq:dt_bar_delta}
		\bar\delta(k+1)=\bar\delta(k)-c(k)(L_e\otimes I_n)\bar\delta(k-d)+c(k)\bar\eta(k),
	\end{equation}
	where $\bar\eta(k)$ is the lower block of $(Q^{-1}\otimes I_n)\tilde\eta(k)$.
	
	Let $R$ be a nonsingular matrix such that
	$
	RL_eR^{-1}=J,
	$
	where $J$ is the Jordan canonical form of $L_e$ with
	$
	J=\mathrm{diag}(J_{\lambda_2,n_2},\dots,J_{\lambda_\ell,n_\ell}),
	\sum_{p=2}^{\ell}n_p=N-1.
	$
	Define
	$
	Y(k):=(R\otimes I_n)\bar\delta(k).
	$
	Then, \eqref{eq:dt_bar_delta} is transformed into
	\begin{equation}\label{eq:dt_Y}
		Y(k+1)=Y(k)-c(k)(J\otimes I_n)Y(k-d)+c(k)\hat\eta(k),
	\end{equation}
	where $\hat\eta(k):=(R\otimes I_n)\bar\eta(k)$.
	
	Consider one Jordan block associated with an eigenvalue $\lambda\in\mathbb C$ of size $m$, and let $y_r(k)\in\mathbb R^n$ denote the $r$th component in this block. Then, \eqref{eq:dt_Y} yields
	\begin{align}
		y_m(k+1) &\hspace{-0.05cm}=\hspace{-0.05cm} y_m(k) \hspace{-0.05cm}- \hspace{-0.05cm}c(k)\lambda y_m(k-d)\hspace{-0.05cm} + \hspace{-0.05cm}c(k)\zeta_m(k), \label{eq:dt_block_last} \\
		y_r(k+1) &= y_r(k) - c(k)\lambda y_r(k-d) - c(k)y_{r+1}(k-d) \notag \\
		&\quad + c(k)\zeta_r(k), \quad r=1,\dots,m-1. \label{eq:dt_block_chain}
	\end{align}
	Each $\zeta_r(k)$ is an $\mathbb R^n$-valued linear combination of $\{w_j(k)\}_{j=1}^N$ with deterministic coefficients. 
	
	It therefore suffices to analyze the scalar delayed stochastic recursion
	\[
	y(k+1)=y(k)-c(k)\lambda\,y(k-d)+c(k)\zeta(k),
	\]
	where $\Re(\lambda)>0$ and $\mathbb E[\zeta(k)]=0$. 
	The solution admits the variation-of-constants representation
	\begin{equation}\label{eq:dt_voc}
		y(k)=\sum_{r=-d}^{0}\Gamma_h(k,r)y(r)
		+\sum_{s=0}^{k-1}\Gamma(k,s+1)c(s)\zeta(s).
	\end{equation}
	Then, we have
	\begin{align}
		\mathbb{E}\|y(k)\|^2 \le & 2\mathbb{E}\bigg\|\sum_{r=-d}^{0}\Gamma_h(k,r)y(r)\bigg\|^2 \notag \\
		& + 2\mathbb{E}\bigg\|\sum_{s=0}^{k-1}\Gamma(k,s+1)c(s)\zeta(s)\bigg\|^2. \label{eq:dt_ms_bound0}
	\end{align}
	Since the initial history contains only finitely many terms, we have
	\(\mathbb E\left\|\sum_{r=-d}^{0}\Gamma_h(k,r)y(r)\right\|^2
	\le
	(d+1)\sum_{r=-d}^{0}
	|\Gamma_h(k,r)|^2\,\mathbb E\|y(r)\|^2.\)
	By the exponential estimate for the history fundamental solution, there exist constants
	$C_h>0$ and $\varrho_h>0$ such that
	$
	|\Gamma_h(k,r)|^2
	\le
	C_h\exp\!\Big(-\varrho_h\sum_{i=0}^{k-1}c(i)\Big),
	-d\le r\le 0.
	$
	Therefore,
	\(\mathbb{E}\left\|\sum_{r=-d}^{0}\Gamma_h(k,r)y(r)\right\|^2
	\le
	(d+1)C_h
	\exp\!\left(-\varrho_h\sum_{i=0}^{k-1}c(i)\right)
	$ $\times\sum_{r=-d}^{0}\mathbb{E}\|y(r)\|^2.\)
	Since $\sum_{k=0}^{\infty} c(k) = \infty$, it follows that
	\begin{equation}\label{eq:dt_history_zero}
		\lim_{k\to\infty}
		\mathbb E\Big\|\sum_{r=-d}^{0}\Gamma_h(k,r)y(r)\Big\|^2
		=0.
	\end{equation}
	For the stochastic term, using the zero-mean property and temporal uncorrelatedness of $\zeta(s)$, we have
	\(\mathbb E\left\|\sum_{s=0}^{k-1}\Gamma(k,s+1)c(s)\zeta(s)\right\|^2
	=
	\sum_{s=0}^{k-1}|\Gamma(k,s+1)|^2c(s)^2\mathbb E\|\zeta(s)\|^2
	\le
	M_b\sum_{s=0}^{k-1}|\Gamma(k,s+1)|^2 c(s)^2 \bar b(s)^2.\)
	where $M_b>0$ is a constant. Applying Lemma~\ref{lem:dt_fundamental_exp_A} again yields the rigorous upper bound:
	\begin{equation}\label{eq:dt_C4_bound_joint}
		\begin{split}
			&\mathbb E\Big\|\sum_{s=0}^{k-1}\Gamma(k,s+1)c(s)\zeta(s)\Big\|^2 \\
			&\quad \le
			C M_b\sum_{s=0}^{k-1}\exp\!\Big(-\varrho_0\sum_{r=s+1}^{k-1}c(r)\Big)c(s)^2 \bar b(s)^2.
		\end{split}
	\end{equation}
	By \textup{(C1)} , the right-hand side tends to zero as $k\to\infty$. Combining \eqref{eq:dt_ms_bound0}, \eqref{eq:dt_history_zero},  and \eqref{eq:dt_C4_bound_joint}, we conclude that
	$
	\lim_{k\to\infty}\mathbb E\|y(k)\|^2=0.
	$
	
	It remains to treat the preceding components in the same Jordan chain. For $r=m-1,\dots,1$, the variation-of-constants formula yields:
	\begin{equation}\label{eq:yr_voc}
		\begin{split}
			y_r(k) =\hspace{-0.15cm}\,& \sum_{s=-d}^{0}\Gamma_h(k,s)y_r(s) \hspace{-0.1cm}-\hspace{-0.1cm} \sum_{s=0}^{k-1} \Gamma(k,s+1)c(s)y_{r+1}(s-d) \\
			&+ \sum_{s=0}^{k-1} \Gamma(k,s+1)c(s)\zeta_r(s).
		\end{split}
	\end{equation}
	Since $\Gamma_h(k,s)$ decays exponentially, the initial condition response trivially satisfies $\lim_{k\to\infty}\mathbb E\Big\|\sum_{s=-d}^{0}$ $\Gamma_h(k,s)y_r(s)\Big\|^2 = 0.$
	For the noise term, the uncorrelatedness of $\zeta_r(s)$ and the bound $\mathbb E\|\zeta_r(s)\|^2 \le M_b \bar b(s)^2$ imply
	\(\mathbb E\left\|\sum_{s=0}^{k-1}\Gamma(k,s+1)c(s)\zeta_r(s)\right\|^2
	\le
	C M_b\sum_{s=0}^{k-1}
	e^{-\varrho_0\sum_{l=s+1}^{k-1}c(l)}
	c(s)^2 \bar b(s)^2,\)
	which converges to $0$ as $k\to\infty$ by \textup{(C1)}.
	For the coupling term, define $a_{k,s}:=|\Gamma(k,s+1)|c(s)$. By  Lemma~\ref{lem:dt_kernel_toeplitz}, there exists $M_a<\infty$ such that $\sup_k \sum_{s=0}^{k-1}a_{k,s}\le M_a$. Applying the Cauchy-Schwarz inequality gives:
	\begin{align*}
		&\mathbb E\Big\|\sum_{s=0}^{k-1}\Gamma(k,s+1)c(s)y_{r+1}(s-d)\Big\|^2 \\
		&\quad \le \mathbb E\bigg[ \Big(\sum_{s=0}^{k-1}a_{k,s}\|y_{r+1}(s-d)\|\Big)^2 \bigg] \\
		&\quad \le M_a \sum_{s=0}^{k-1}a_{k,s} \mathbb E\|y_{r+1}(s-d)\|^2.
	\end{align*}
	Since $\lim_{s\to\infty}\mathbb E\|y_{r+1}(s-d)\|^2 = 0$, $\lim_{k\to\infty}a_{k,s}=0$, and $\sum_{s}a_{k,s}\le M_a$, the Toeplitz Lemma (\cite{Knopp:51})  ensures that the right-hand side converges to $0$.
	
	Consequently, $\lim_{k\to\infty}\mathbb E\|y_r(k)\|^2 = 0$. By induction, all Jordan coordinates converge to zero in mean square, implying $\delta(k)\to 0$. The proof is complete.	\hfill$\qed$
	
	\begin{remark}
		Condition \textup{(C1)} may appear complicated because of its convolution structure. In fact, it corresponds to the discrete-time analogue of the sufficient conditions for mean square stability of continuous-time consensus dynamics with additive noises in \cite{zong2018consensus,zong2019consensus}.  In some important special cases, it can be further reduced to simpler sufficient conditions.
	\end{remark}
	
	\begin{remark}\label{d0}
		For delay-free networks ($d=0$), the state-of-the-art work \cite{erfenwjm} establishes mean square weak consensus under the conditions $\sum_{k=0}^{\infty} c(k)=\infty$ and $\lim_{k\to\infty} c(k)\bar b(k)^2 = 0$.
		By contrast, the present paper relaxes this requirement by replacing it with the weaker convolution-type condition \textup{(C1)}. For example, let $c(k)=1/(k+1)$ and $\bar b(k)^2=k+1$ if $k=2^m$ for some $m=0,1,2,\dots$, and $\bar b(k)^2=0$ otherwise. Then, we have $\sum_{k=0}^\infty c(k)=\infty$, but $\lim_{k\to\infty}c(k)\bar b(k)^2\neq 0$, so the condition in \cite{erfenwjm} fails. On the other hand, one can verify that $\sum_{s=0}^{k-1}e^{-\varrho_0\sum_{r=s+1}^{k-1}c(r)}c(s)^2\bar b(s)^2 \to 0$, and hence condition \textup{(C1)} holds.
		
		Moreover, the present analysis is established for general directed graphs containing a spanning tree, whereas \cite{erfenwjm} is restricted to undirected graphs. In addition, when the graph is undirected and $d=0$, condition \eqref{eq:cbar_A} becomes $\sup_{k\ge k_0} c(k)<2/\lambda_N(L)$, which is weaker than the condition $\sup_{k\ge0} c(k)\le 1/\lambda_N(L)$ adopted in the aforementioned work.
	\end{remark}

	Theorem~\ref{thm:dt_ms_weak} directly yields the following  corollary. 
	
	\begin{corollary}\label{cor:dt_ms_weak_D1D3}
		Assume that the communication graph contains a spanning tree. Suppose that
		\eqref{eq:cbar_A} holds for each nonzero eigenvalue $\lambda_i$,
		$i=2,\dots,N$, of $L$ with $\lambda=\lambda_i$. If one of the following conditions holds:
		\begin{align*}
			\textup{(C2)}\hspace{0.1cm}
			&\sup_{k\ge0}\bar b(k)\le M_b<\infty,\hspace{0.12cm}
			\sum_{k=0}^{\infty}c(k)=\infty,\hspace{0.12cm}
			\lim_{k\to\infty}c(k)=0;\\[1mm]
			\textup{(C3)}\hspace{0.1cm}
			&c(k)\equiv c_0>0,\quad
			\lim_{k\to\infty}\bar b(k)=0,
		\end{align*}
		where $M_b>0$ and $c_0>0$ are constants, then the system
		\eqref{eq:system} under the protocol \eqref{eq:protocol} achieves mean square
		weak consensus.
	\end{corollary}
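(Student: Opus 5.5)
The plan is to deduce the corollary from Theorem~\ref{thm:dt_ms_weak}: under either (C2) or (C3) I will check that condition (C1) holds. The standing assumptions (spanning tree, and \eqref{eq:cbar_A} for every $\lambda_i$, $i\ge2$) are exactly those of the theorem, so the same $\varrho_0>0$ is available. Write
\[
S(k):=\sum_{s=0}^{k-1}e^{-\varrho_0\sum_{r=s+1}^{k-1}c(r)}\,c(s)^2\bar b(s)^2 .
\]
It then suffices to show that $\sum_k c(k)=\infty$ and $S(k)\to0$ in each case.

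\emph{Case (C2).} Divergence of $\sum_k c(k)$ is assumed. Since $\bar b(s)\le M_b$, we have $S(k)\le M_b^2\sum_{s=0}^{k-1}e^{-\varrho_0\sum_{r=s+1}^{k-1}c(r)}c(s)\,u(s)$ with $u(s):=c(s)\to0$. I will apply a Toeplitz argument with weights $\tilde a_{k,s}:=e^{-\varrho_0\sum_{r=s+1}^{k-1}c(r)}c(s)$.

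\begin{itemize}
\item For each fixed $s$, $\tilde a_{k,s}\to0$ because $\sum_r c(r)=\infty$.
\item For the row sums, $c(s)$ is bounded on $s\ge k_0$ by $\bar c$, and $1-e^{-\varrho_0 c(s)}\ge \varrho_0 c(s)e^{-\varrho_0\bar c}$. Hence
\[
\tilde a_{k,s}\le \frac{e^{\varrho_0\bar c}}{\varrho_0}\Bigl(e^{-\varrho_0\sum_{r=s+1}^{k-1}c(r)}-e^{-\varrho_0\sum_{r=s}^{k-1}c(r)}\Bigr).
\]
The right-hand side telescopes, so $\sum_{s\ge k_0}\tilde a_{k,s}\le e^{\varrho_0\bar c}/\varrho_0$.
\item The finitely many terms with $s<k_0$ each tend to zero.
\end{itemize}

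The Toeplitz lemma (as used in Theorem~\ref{thm:dt_ms_weak}) then gives $S(k)\to0$.

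\emph{Case (C3).} With $c(k)\equiv c_0$, divergence of $\sum_k c(k)$ is immediate. The sum becomes $S(k)=c_0^2\sum_{s=0}^{k-1}e^{-\varrho_0c_0(k-1-s)}\bar b(s)^2$, a convolution of the summable geometric kernel $q^{j}$ with $q=e^{-\varrho_0c_0}<1$ against $\bar b(s)^2\to0$. Its row sums are bounded by $1/(1-q)$, and each fixed term decays geometrically, so Toeplitz again gives $S(k)\to0$. Alternatively, split the sum at $k/2$: the tail part is bounded by $\sup_{s\ge k/2}\bar b(s)^2/(1-q)$, and the head part by $q^{k/2}\sup_s\bar b(s)^2/(1-q)$. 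Note that $\bar b$ is bounded because it converges.

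The only step needing care is the uniform bound on the row sums in (C2), namely the telescoping estimate. It requires $c(s)$ to be bounded, which holds since $c(k)\to0$ (or by \eqref{eq:cbar_A}). Also, the assumption behind \eqref{eq:cbar_A} in (C3) requires $c_0$ small enough; this is part of the hypothesis, so no extra work is needed. With (C1) verified in both cases, Theorem~\ref{thm:dt_ms_weak} yields mean square weak consensus.
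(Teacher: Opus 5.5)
Your proposal is correct and follows essentially the paper's route: verify (C1) and invoke Theorem~\ref{thm:dt_ms_weak}. For (C2) you apply the Toeplitz lemma with the kernel $\widetilde a_{k,s}=e^{-\varrho_0\sum_{r=s+1}^{k-1}c(r)}c(s)$ and bound its row sums by the same telescoping argument used in the proof of Lemma~\ref{lem:dt_kernel_toeplitz}. Your separate geometric-convolution argument for (C3) is a genuine addition: the paper's written proof relies on $c(k)\to0$, so it only covers (C2).
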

	
	\textbf{Proof.}
	Since the graph contains a spanning tree, every nonzero eigenvalue
	$\lambda_i$ of $L$ satisfies $\Re(\lambda_i)>0$. Because $c(k)\to0$, there
	exists an integer $k_0$ such that
	$\bar c:=\sup_{k\ge k_0}c(k)$ satisfies
	$2\Re(\lambda_i)-(2d+1)|\lambda_i|^2\bar c>0$ for all $i\ge2$. Thus
	\eqref{eq:cbar_A} holds for every nonzero eigenvalue of $L$.
	By Lemma~\ref{lem:dt_fundamental_exp_A}, for each nonzero mode there exists
	$\varrho_{\lambda_i}>0$. Since the number of nonzero modes is finite, we may
	define $\varrho_0:=\min_{i\ge2}\varrho_{\lambda_i}>0$.
	
	It remains to verify \textup{(C1)}. Let
	\[
	\widetilde a_{k,s}
	:=
	\exp\!\Bigl(-\varrho_0\sum_{r=s+1}^{k-1}c(r)\Bigr)c(s),
	\qquad 0\le s\le k-1.
	\]
	Then, by an argument similar to the proof of
	Lemma~\ref{lem:dt_kernel_toeplitz}, the array
	$\{\widetilde a_{k,s}\}$ satisfies
	$\widetilde a_{k,s}\to0$ for each fixed $s$, and
	\(
	\sup_{k\ge1}\sum_{s=0}^{k-1}\widetilde a_{k,s}<\infty.
	\)
	On the other hand, since $\bar b(s)\le M_b$ and $c(s)\to0$, the deterministic
	sequence $u(s):=c(s)\bar b(s)^2$ satisfies $u(s)\to0$. Hence, by the
	Toeplitz Lemma (\cite{Knopp:51}),
	\[
	\sum_{s=0}^{k-1}\widetilde a_{k,s}u(s)
	=
	\sum_{s=0}^{k-1}
	\exp\!\Bigl(-\varrho_0\sum_{r=s+1}^{k-1}c(r)\Bigr)c(s)^2\bar b(s)^2
	\to0.
	\]
	Therefore, all conditions of Theorem~\ref{thm:dt_ms_weak} are satisfied, and
	the conclusion follows.
	\hfill$\qed$

	\begin{remark}
		Condition \textup{(C1)} extends the diminishing gain condition
		used in a representative stochastic-consensus result.
		Specifically, \cite{lizhang2010} established mean square weak consensus
		for MASs with stochastic communication noises under the
		condition recovered by  
		\textup{(C2)}. In contrast, the present paper allows communication delays
		and privacy noises, and adopts the more general convolution-type condition
		\textup{(C1)} as the basic requirement
		for mean square weak consensus.
	\end{remark}

	We next strengthen the above result from weak consensus to strong consensus
	in both the mean square and almost sure senses.   

	\begin{theorem}
		\label{thm:dt_strong_consensus}
		Assume that the communication graph contains a spanning tree. Suppose that
		\eqref{eq:cbar_A} holds for each nonzero eigenvalue $\lambda_i$,
		$i=2,\dots,N$, of $L$ with $\lambda=\lambda_i$. If the following condition
		holds:
		\[
		\textup{(C4)}\quad
		\sum_{k=0}^{\infty} c(k)=\infty,\quad
		\sum_{k=0}^{\infty} c(k)^2\bar b(k)^2<\infty,
		\]
		then the system \eqref{eq:system} under the protocol \eqref{eq:protocol}
		achieves mean square strong consensus and almost sure strong consensus.
	\end{theorem}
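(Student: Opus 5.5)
Our plan is to split the state into a consensus component and a disagreement component, as in the proof of Theorem~\ref{thm:dt_ms_weak}. Write $x(k)=(J_N\otimes I_n)x(k)+\delta(k)$, and set $\bar x(k):=(\pi^\top\otimes I_n)x(k)$. Multiplying \eqref{eq:stack_system} on the left by $\pi^\top\otimes I_n$ and using $\pi^\top L=0$ gives
\[
\bar x(k+1)=\bar x(k)+c(k)(\pi^\top A\otimes I_n)w(k),
\]
so that $\bar x(k)=\bar x(0)+\sum_{s=0}^{k-1}c(s)(\pi^\top A\otimes I_n)w(s)$. This is a sum of independent zero-mean increments. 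The second moment of the $s$th increment is at most $M c(s)^2\bar b(s)^2$, because a Laplace variable with scale $b$ has variance $2b^2$. Under (C4) the martingale $\bar x(k)$ is therefore bounded in $L^2$. By the $L^2$ martingale convergence theorem (equivalently, the Cauchy criterion $\mathbb E\|\bar x(k)-\bar x(l)\|^2\le M\sum_{s=l}^{k-1}c(s)^2\bar b(s)^2\to0$), there is a random vector $x^*$ with $\mathbb E\|x^*\|^2<\infty$ such that $\bar x(k)\to x^*$ both in mean square and almost surely.

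Next we handle the disagreement part in mean square. Condition (C4) implies (C1). Indeed, the summand $e^{-\varrho_0\sum_{r=s+1}^{k-1}c(r)}c(s)^2\bar b(s)^2$ is dominated by $c(s)^2\bar b(s)^2$, which is summable, and it tends to $0$ for each fixed $s$ because $\sum c=\infty$. Dominated convergence for series then gives (C1). Theorem~\ref{thm:dt_ms_weak} therefore yields $\mathbb E\|\delta(k)\|^2\to0$. Since $x_i(k)-x^*=\delta_i(k)+\bar x(k)-x^*$, mean square strong consensus follows.

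The main obstacle is almost sure convergence of $\delta(k)$. We reduce it, exactly as before, to the Jordan coordinates and the scalar recursion \eqref{eq:dt_voc}, and then use backstepping along each Jordan chain. The history term in \eqref{eq:dt_voc} tends to $0$ deterministically, pathwise. The noise term is $S(k):=\sum_{s=0}^{k-1}\Gamma(k,s+1)c(s)\zeta(s)$. Our plan is to write $\Gamma(k,s+1)$ as a summation-by-parts kernel against the martingale $M(k):=\sum_{s=0}^{k-1}c(s)\zeta(s)$, which converges almost surely under (C4) to some $M_\infty$. Since $\Gamma(k,k)=1$, Abel summation gives
\[
S(k)=M(k)-\sum_{s=0}^{k-1}\bigl(\Gamma(k,s+1)-\Gamma(k,s)\bigr)M(s),
\]
with the convention $\Gamma(k,0)$ defined by the same recursion. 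Substituting $M(s)=M_\infty+(M(s)-M_\infty)$, the constant $M_\infty$ contributes $M_\infty\bigl(1-\sum_s(\Gamma(k,s+1)-\Gamma(k,s))\bigr)=M_\infty\Gamma(k,0)\to0$ by Lemma~\ref{lem:dt_fundamental_exp_A}. The remainder is a Toeplitz transform of the sequence $M(s)-M_\infty\to0$, and it vanishes provided $\sup_k\sum_s|\Gamma(k,s+1)-\Gamma(k,s)|<\infty$.

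The hardest step is bounding these difference kernels. We would try to show that $\Gamma(k,s+1)-\Gamma(k,s)$ is itself $O(c(s))$ times a solution of the homogeneous equation started at $s+1+d$, using the recursion \eqref{eq:Gamma_two_rec} in the second argument. One checks that $\Gamma(k,s)-\Gamma(k,s+1)$ solves \eqref{eq:dt_hom} with initial data supported on $\{s+1,\dots,s+d+1\}$ of size $O(c)$. Lemma~\ref{lem:dt_kernel_toeplitz}-type estimates then give $\sum_s|\Gamma(k,s+1)-\Gamma(k,s)|\lesssim\sum_s c(s)e^{-\varrho\sum_{s}^{k}c/2}\le M$. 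If this fails, the fallback is to apply the Robbins--Siegmund theorem to a Lyapunov--Krasovskii functional built from the proof of Lemma~\ref{lem:dt_fundamental_exp_A}; (C4) provides the summable perturbation and $\sum c=\infty$ forces the limit to be zero.

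For the chain components $y_r$, the coupling term $\sum_s\Gamma(k,s+1)c(s)y_{r+1}(s-d)$ tends to $0$ pathwise by Lemma~\ref{lem:dt_kernel_toeplitz}, since $y_{r+1}\to0$ almost surely by induction. Hence $\delta(k)\to0$ almost surely, and combining this with $\bar x(k)\to x^*$ almost surely gives almost sure strong consensus.
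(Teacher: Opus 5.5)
Your proposal is correct. The mean-square part is the same as the paper's: an $L^2$-bounded martingale for $\bar x(k)$, then (C4)$\Rightarrow$(C1) and Theorem~\ref{thm:dt_ms_weak}.

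For almost sure convergence of $\delta(k)$ you take a genuinely different route. The paper builds a delay-free comparison chain $\zeta_r$ driven by the same noise. It shows $\zeta_r\to0$ a.s. via Robbins--Siegmund, using $|1-c\lambda|^2\le 1-\beta_\lambda c$, together with backward induction through the kernel $\Phi_\lambda$. It then observes that the error $\zeta_r-y_r$ obeys the delayed homogeneous recursion forced by $\lambda(\zeta_r(k-d)-\zeta_r(k))\to0$, and this forcing is removed pathwise by Lemma~\ref{lem:dt_kernel_toeplitz}.

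You instead sum by parts against the a.s.\ convergent martingale $M(k)=\sum_{s<k}c(s)\zeta(s)$. The step you flag as hardest holds exactly, so your fallback is unnecessary. For $s+1\le k\le s+d$ one has $\Gamma(k,s)=\Gamma(k,s+1)=1$. On the window $\{s+1,\dots,s+d+1\}$ the difference is $(0,\dots,0,-\lambda c(s+d))$, so by uniqueness for the recursion
\[
\Gamma(k,s)-\Gamma(k,s+1)=-\lambda\, c(s+d)\,\Gamma(k,s+d+1),\qquad k\ge s+d+1,
\]
and the difference vanishes for $s+1\le k\le s+d$. Hence
\[
\sum_{s=0}^{k-1}\bigl|\Gamma(k,s+1)-\Gamma(k,s)\bigr|=|\lambda|\sum_{u=d}^{k-1}a_{k,u},
\]
where $a_{k,u}$ is exactly the kernel of Lemma~\ref{lem:dt_kernel_toeplitz}. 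Both Toeplitz hypotheses (bounded row sums, and entries vanishing as $k\to\infty$ for fixed $s$) follow at once. The $M_\infty\Gamma(k,0)$ and $M(k)-M_\infty$ terms vanish as you say.

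Your route needs only one probabilistic input: a.s.\ convergence of an $L^2$-bounded martingale per Jordan coordinate. Everything else is the deterministic Lemmas~\ref{lem:dt_fundamental_exp_A}--\ref{lem:dt_kernel_toeplitz} applied pathwise. The chain coupling terms are handled exactly as in the paper. You avoid the comparison system, Robbins--Siegmund, and the separate delay-free kernel.

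The paper's route instead reuses a standard stochastic-approximation argument for the delay-free part. It treats the delay as a vanishing deterministic perturbation, without needing the exact shift identity for the resolvent.
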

	
	  \textbf{Proof.}
	First, we prove mean square strong consensus. Let $\pi\in\mathbb R^N$ satisfy
	$	\pi^T L=0, \pi^T\mathbf 1_N=1,
	$
	and define $
	J_N:=\mathbf 1_N\pi^T,
	\bar x(k):=(\pi^T\otimes I_n)x(k),
	\delta(k):=((I_N-J_N)\otimes I_n)x(k).
	$
	Then, we have $x(k)=(\mathbf 1_N\otimes I_n)\bar x(k)+\delta(k)$.
	We first prove the mean-square convergence of the consensus component.
	Multiplying \eqref{eq:stack_system} by $(\pi^T\otimes I_n)$ and using
	$\pi^T L=0$, we obtain
	\begin{equation}\label{eq:xbar_rec_fixed}
		\bar x(k+1)=\bar x(k)+c(k)\bar\eta(k),
	\end{equation}
	where $\bar\eta(k):=(\pi^T A\otimes I_n)w(k)$.
	Thus, iterating \eqref{eq:xbar_rec_fixed} gives
	\begin{equation}\label{eq:xbar_sum_fixed}
		\bar x(k)=\bar x(0)+\sum_{s=0}^{k-1}c(s)\bar\eta(s).
	\end{equation}
	Let
	\[
	\mathcal F_k:=
	\sigma\bigl(x(r),-d\le r\le 0;\, w(t),0\le t\le k-1\bigr),
	\quad k\ge0.
	\]
	By the zero-mean property of the Laplace noises and their independence across
	time, $\{\bar\eta(k)\}$ is a martingale-difference sequence with respect to
	$\{\mathcal F_k\}$, namely $\mathbb E[\bar\eta(k)\mid\mathcal F_k]=0$.
	Moreover, since $\bar\eta(k)$ is a deterministic linear combination of the
	entries of $w(k)$, and $\mathbb E|w_j^{(\ell)}(k)|^2=2b_j(k)^2\le 2\bar b(k)^2$,
	there exists a constant $C_\pi>0$, depending only on $\pi$, $A$, and $n$,
	such that
	\begin{equation}\label{eq:bar_eta_bd_fixed}
		\mathbb E\bigl[\|\bar\eta(k)\|^2\mid\mathcal F_k\bigr]
		\le C_\pi \bar b(k)^2,
		\qquad \text{a.s.}
	\end{equation}
	Define $M(k):=\sum_{s=0}^{k-1}c(s)\bar\eta(s)$, and then $\{M(k),\mathcal F_k\}$ is an $\mathbb R^n$-valued martingale. Furthermore,
	by the orthogonality of martingale increments and \eqref{eq:bar_eta_bd_fixed},
	\begin{align*}
		\mathbb E\|M(k)\|^2
		&=
		\sum_{s=0}^{k-1}c(s)^2\mathbb E\|\bar\eta(s)\|^2\\
		&\le
		C_\pi\sum_{s=0}^{k-1}c(s)^2\bar b(s)^2\\
		&\le
		C_\pi\sum_{s=0}^{\infty}c(s)^2\bar b(s)^2
		<\infty.
	\end{align*}
	Hence $\{M(k)\}$ is bounded in $L^2$. By the $L^2$ martingale convergence
	theorem, there exists an $\mathbb R^n$-valued random vector $M(\infty)\in L^2$
	such that $\lim_{k\to\infty}\mathbb E\|M(k)-M(\infty)\|^2=0$.
	Consequently, with $x^\star:=\bar x(0)+M(\infty)$, we obtain from
	\eqref{eq:xbar_sum_fixed} that
	\begin{equation}\label{eq:xbar_to_xstar_fixed}
		\lim_{k\to\infty}\mathbb E\|\bar x(k)-x^\star\|^2=0.
	\end{equation}
	We next show that the disagreement component converges to zero in mean square.
	By Theorem~\ref{thm:dt_ms_weak}, it suffices to verify  \textup{(C1)}.
	Let $q(s):=c(s)^2\bar b(s)^2$. Then, we have $\sum_{s=0}^{\infty}q(s)<\infty$.
	For any $\varepsilon>0$, choose $N_\varepsilon$ such that
	$\sum_{s=N_\varepsilon}^{\infty}q(s)<\varepsilon$.
	For the finite part, since $\sum_{k=0}^{\infty}c(k)=\infty$, for every fixed
	$s$ one has
	$\exp\!\Bigl(-\varrho_0\sum_{r=s+1}^{k-1}c(r)\Bigr)\to0$ as $k\to\infty$.
	Therefore,
	\[
	\sum_{s=0}^{N_\varepsilon-1}
	\exp\!\Bigl(-\varrho_0\sum_{r=s+1}^{k-1}c(r)\Bigr)q(s)
	\to0.
	\]
	For the tail part, using the fact that the exponential factor is bounded by
	one, we have
	\[
	\sum_{s=N_\varepsilon}^{k-1}
	\exp\!\Bigl(-\varrho_0\sum_{r=s+1}^{k-1}c(r)\Bigr)q(s)
	\le
	\sum_{s=N_\varepsilon}^{\infty}q(s)
	<\varepsilon.
	\]
	Hence
	\[
	\limsup_{k\to\infty}
	\sum_{s=0}^{k-1}
	\exp\!\Bigl(-\varrho_0\sum_{r=s+1}^{k-1}c(r)\Bigr)c(s)^2\bar b(s)^2
	\le \varepsilon.
	\]
	Letting $\varepsilon\downarrow0$ yields \textup{(C1)}. Therefore,
	Theorem~\ref{thm:dt_ms_weak} gives
	\begin{equation}\label{eq:delta_ms_to0_fixed}
		\lim_{k\to\infty}\mathbb E\|\delta(k)\|^2=0.
	\end{equation}
	Finally, for each Agent $i$,
	$x_i(k)-x^\star=\delta_i(k)+\bar x(k)-x^\star$,
	where $\delta_i(k)$ denotes the $i$th $n$-dimensional block of $\delta(k)$.
	Thus,
	\[
	\begin{aligned}
		\mathbb E\|x_i(k)-x^\star\|^2
		&\le
		2\mathbb E\|\delta_i(k)\|^2
		+
		2\mathbb E\|\bar x(k)-x^\star\|^2 \\
		&\le
		2\mathbb E\|\delta(k)\|^2
		+
		2\mathbb E\|\bar x(k)-x^\star\|^2 .
	\end{aligned}
	\]
	Taking $k\to\infty$ and using \eqref{eq:xbar_to_xstar_fixed} and
	\eqref{eq:delta_ms_to0_fixed}, we obtain
	$\lim_{k\to\infty}\mathbb E\|x_i(k)-x^\star\|^2=0$,
	$i=1,\dots,N$.
	This proves mean-square strong consensus.
	
	Then, we prove almost sure strong consensus.
	Let
	\[
	\mathcal F_k:=\sigma\!\bigl(x(r),w(r): -d\le r\le k-1\bigr),\qquad k\ge0.
	\]
	By Lemma~\ref{lem:lap_decomp}, there exists a probability vector
	$\pi\in\mathbb R^N$ such that $\pi^\top L=0$ and $\pi^\top \mathbf 1_N=1$.
	Define
	$
	J:=\mathbf 1_N\pi^\top,
	\bar x(k):=(\pi^\top\otimes I_n)x(k),
	\delta(k):=((I_N-J)\otimes I_n)x(k).
	$
	Then, we have
	$
	x(k)=(\mathbf 1_N\otimes I_n)\bar x(k)+\delta(k).
	$
	We first prove that
	\begin{equation}\label{eq:delta_as_zero}
		\delta(k)\to0,\qquad \text{a.s.}
	\end{equation}
	Using the same Jordan decomposition as in the proof of
	Theorem~\ref{thm:dt_ms_weak}, it suffices to consider one Jordan block
	associated with a nonzero eigenvalue $\lambda\in\mathbb C$ with
	$\Re(\lambda)>0$. Let the corresponding delayed coordinates be
	$y_1(k),\dots,y_m(k)$. Then,
	\begin{align}
		y_m(k+1)
		&=y_m(k)-c(k)\lambda\,y_m(k-d)+c(k)\omega_m(k),
		\label{eq:as_block_last_strong}\\
		y_r(k+1)
		&=y_r(k)-c(k)\lambda\,y_r(k-d)
		-c(k)y_{r+1}(k-d) \notag\\
		&\quad +c(k)\omega_r(k),
		\qquad r=1,\dots,m-1.
		\label{eq:as_block_chain_strong}
	\end{align}
	where each $\omega_r(k)$ is a deterministic linear combination of
	$\{w_j(k)\}_{j=1}^N$.
	By the Laplace noise model and the independence across time, there exists
	a constant $C_r>0$ such that
	\begin{equation}\label{eq:omega_second_strong}
		\mathbb E\!\left[\omega_r(k)\mid \mathcal F_k\right]=0,
		\qquad
		\mathbb E\!\left[|\omega_r(k)|^2\mid \mathcal F_k\right]\le C_r\bar b(k)^2
		\quad \text{a.s.}
	\end{equation}
	
	Now introduce the corresponding delay-free comparison block
	\begin{align}
		\zeta_m(k+1)
		&=\zeta_m(k)-c(k)\lambda\,\zeta_m(k)+c(k)\omega_m(k),
		\label{eq:as_zeta_last_strong}\\
		\zeta_r(k+1)
		&=\zeta_r(k)-c(k)\lambda\,\zeta_r(k)-c(k)\zeta_{r+1}(k) \notag\\
		&\quad +c(k)\omega_r(k),
		\qquad r=1,\dots,m-1.
		\label{eq:as_zeta_chain_strong}
	\end{align}
	
	We claim that
	\begin{equation}\label{eq:zeta_as_zero}
		\zeta_r(k)\to0,\qquad \text{a.s.},\qquad r=1,\dots,m.
	\end{equation}
	
	For the last coordinate, let
	$\bar c:=\sup_{k\ge k_0}c(k)$ and
	$\beta_\lambda:=2\Re(\lambda)-|\lambda|^2\bar c$.
	Since \eqref{eq:cbar_A} gives
	$2\Re(\lambda)-(2d+1)|\lambda|^2\bar c>0$,
	one has $\beta_\lambda>0$. Hence, for all $k\ge k_0$,
	\[
	|1-c(k)\lambda|^2
	=
	1-\bigl(2\Re(\lambda)-|\lambda|^2c(k)\bigr)c(k)
	\le
	1-\beta_\lambda c(k).
	\]
	Therefore, for all $k\ge k_0$,
	\begin{align*}
		\mathbb E\!\left(|\zeta_m(k+1)|^2\mid\mathcal F_k\right)
		&=
		|1-c(k)\lambda|^2|\zeta_m(k)|^2 \\
		&\quad+
		c(k)^2\mathbb E\!\left(|\omega_m(k)|^2\mid\mathcal F_k\right)\\
		&\le
		(1-\beta_\lambda c(k))|\zeta_m(k)|^2
		+
		C\,c(k)^2\bar b(k)^2 .
	\end{align*}
	where the cross term vanishes by \eqref{eq:omega_second_strong}.
	Since $\sum_{k=0}^{\infty}c(k)^2$ $\bar b(k)^2<\infty$,
	the Robbins--Siegmund theorem (\cite{robbinssiegmund1971}) implies that $|\zeta_m(k)|^2$ converges almost surely and
	$\sum_{k=k_0}^{\infty}c(k)|\zeta_m(k)|^2<\infty$ a.s.
	Because $\sum_{k=0}^{\infty}c(k)=\infty$, the only possible almost sure limit is zero. Thus
	$\zeta_m(k)\to0$ a.s.
	We next prove \eqref{eq:zeta_as_zero} by backward induction. Fix
	$r\in\{1,\dots,m-1\}$ and assume that $\zeta_{r+1}(k)\to0$ a.s.
	Let $\widetilde\zeta_r(k)$ be the solution of
	\[
	\widetilde\zeta_r(k+1)=\widetilde\zeta_r(k)-c(k)\lambda\,\widetilde\zeta_r(k)+c(k)\omega_r(k).
	\]
	By the same Robbins--Siegmund argument as above,
	$\widetilde\zeta_r(k)\to0$ a.s.
	Now define $\vartheta_r(k):=\zeta_r(k)-\widetilde\zeta_r(k)$, we have
	\[
	\vartheta_r(k+1)=\vartheta_r(k)-c(k)\lambda\,\vartheta_r(k)-c(k)\zeta_{r+1}(k).
	\]
	By the variation-of-constants formula,
	\[
	\vartheta_r(k)
	=
	\Phi_\lambda(k,0)\vartheta_r(0)
	-
	\sum_{s=0}^{k-1}\Phi_\lambda(k,s+1)c(s)\zeta_{r+1}(s),
	\]
	where
	$\Phi_\lambda(k,s):=\prod_{j=s}^{k-1}(1-c(j)\lambda)$, $k>s$, and
	$\Phi_\lambda(k,k)=1$.
	For $k>s\ge k_0$,
	\[
	|\Phi_\lambda(k,s+1)|^2
	\le
	\exp\!\Bigl(-\beta_\lambda\sum_{j=s+1}^{k-1}c(j)\Bigr).
	\]
	Hence the proof of Lemma~\ref{lem:dt_kernel_toeplitz} applies verbatim
	with $\Gamma(k,s+1)$ replaced by $\Phi_\lambda(k,s+1)$.
	Since $\zeta_{r+1}(s)\to0$ a.s., the Toeplitz Lemma (\cite{Knopp:51}) implies
	\[
	\sum_{s=0}^{k-1}\Phi_\lambda(k,s+1)c(s)\zeta_{r+1}(s)\to0
	\qquad \text{a.s.}
	\]
	Also, $\Phi_\lambda(k,0)\vartheta_r(0)\to0$.
	Therefore, $\vartheta_r(k)\to0$ a.s.
	and hence $\zeta_r(k)=\widetilde\zeta_r(k)+\vartheta_r(k)\to0$ a.s.
	By backward induction, \eqref{eq:zeta_as_zero} follows.
	
	Next, define $\theta_r(k):=\zeta_r(k)-y_r(k)$, $r=1,\dots,m$.
	For the last coordinate, subtracting \eqref{eq:as_block_last_strong}
	from \eqref{eq:as_zeta_last_strong} yields
	\[
	\theta_m(k+1)=\theta_m(k)-c(k)\lambda\,\theta_m(k-d)+c(k)g_m(k),
	\]
	where
	$g_m(k):=\lambda\bigl(\zeta_m(k-d)-\zeta_m(k)\bigr)\to0$ a.s.
	By the variation-of-constants formula for delayed difference equations,
	\[
	\theta_m(k)
	=
	\sum_{r=-d}^{0}\Gamma_h(k,r)\theta_m(r)
	+
	\sum_{s=0}^{k-1}\Gamma(k,s+1)c(s)g_m(s).
	\]
	The first term converges to zero by \eqref{eq:Gammah_exp_A} and
	$\sum_{k=0}^{\infty}c(k)=\infty$, while the second term converges to zero by
	Lemma~\ref{lem:dt_kernel_toeplitz}, since $g_m(k)\to0$ a.s. Hence
	$\theta_m(k)\to0$ a.s.
	Together with $\zeta_m(k)\to0$ a.s., this yields
	$y_m(k)\to0$ a.s.
	Now suppose that $y_{r+1}(k)\to0$ a.s. for some $r\in\{1,\dots,m-1\}$.
	Subtracting \eqref{eq:as_block_chain_strong} from
	\eqref{eq:as_zeta_chain_strong}, we obtain
	\[
	\theta_r(k+1)=\theta_r(k)-c(k)\lambda\,\theta_r(k-d)+c(k)h_r(k),
	\]
	where
	$
	h_r(k):=
	\lambda\bigl(\zeta_r(k-d)-\zeta_r(k)\bigr)
	-
	\bigl(\zeta_{r+1}(k)-y_{r+1}(k-d)\bigr)\to0
	\hspace{0.2cm} \text{a.s.}
	$
	Again, by the variation-of-constants formula and
	Lemma~\ref{lem:dt_kernel_toeplitz},
	$\theta_r(k)\to0$ a.s.
	and therefore
	$y_r(k)\to0$ a.s.
	By backward induction, all Jordan coordinates satisfy
	$y_r(k)\to0$ a.s., $r=1,\dots,m$.
	Since this holds for every Jordan block, the disagreement process satisfies
	$\delta(k)\to0$ a.s.
	This proves \eqref{eq:delta_as_zero}.
	
	It remains to prove the almost sure convergence of the consensus component.
	Multiplying \eqref{eq:stack_system} by $(\pi^\top\otimes I_n)$ and using
	$\pi^\top L=0$, we obtain
	\[
	\bar x(k+1)=\bar x(k)+c(k)(\pi^\top A\otimes I_n)w(k).
	\]
	Thus $\{\bar x(k)\}$ is a martingale with respect to $\{\mathcal F_k\}$.
	Moreover, since the entries of $w(k)$ are independent Laplace random variables,
	there exists a constant $C_\pi>0$ such that
	\[
	\mathbb E\!\left[\|(\pi^\top A\otimes I_n)w(k)\|^2\mid\mathcal F_k\right]
	\le
	C_\pi \bar b(k)^2
	\qquad \text{a.s.}
	\]
	Hence
	\[
	\mathbb E\!\left[\|\bar x(k+1)\|^2\mid\mathcal F_k\right]
	\le
	\|\bar x(k)\|^2 + C_\pi c(k)^2\bar b(k)^2.
	\]
	Since $\sum_{k=0}^{\infty}c(k)^2\bar b(k)^2<\infty$,
	it follows that $\sup_{k\ge0}$ $\mathbb E\|\bar x(k)\|^2<\infty$.
	Therefore, by the martingale convergence theorem, there exists an
	$\mathbb R^n$-valued random vector $x^\star$ such that
	$\bar x(k)\to x^\star$ a.s.
	Finally, using $x(k)=(\mathbf 1_N\otimes I_n)\bar x(k)+\delta(k)$,
	together with \eqref{eq:delta_as_zero}, we conclude that
	$x(k)\to (\mathbf 1_N\otimes I_n)x^\star$ a.s.
	Equivalently, $x_i(k)\to x^\star$ a.s., $i=1,\dots,N$.
	Therefore, the system achieves almost sure strong consensus.
	Now, the proof is completed.
	\hfill$\qed$

	\begin{remark}
		Condition \textup{(C4)} is stronger than condition \textup{(C1)}. Indeed, the square-summability condition $\sum_{k=0}^{\infty} c(k)^2\bar b(k)^2<\infty$
		implies the convolution condition in \textup{(C1)} under $\sum_{k=0}^{\infty}c(k)=\infty$. In general, condition \textup{(C1)} does not imply \textup{(C4)}. For example, let \(c(k)\equiv 1\) and \(\bar b(k)=(k+1)^{-1/2}\). Then, \(\sum_{k=0}^{\infty} c(k)^2\bar b(k)^2=\sum_{k=0}^{\infty}(k+1)^{-1}=\infty\), so condition \textup{(C4)} does not hold. On the other hand, the convolution term in condition \textup{(C1)} still converges to zero.
	\end{remark}
	
As direct consequences of Theorem~\ref{thm:dt_strong_consensus}, we obtain
the following two corollaries.
	
	\begin{corollary}
		\label{cor:dt_strong_special}
		Assume that the communication graph contains a spanning tree. Suppose that
		\eqref{eq:cbar_A} holds for each nonzero eigenvalue $\lambda_i$,
		$i=2,\dots,N$, of $L$ with $\lambda=\lambda_i$. If either of the following
		two conditions is satisfied:
		\begin{align*}
			\textup{(C5)}\hspace{0.1cm}
			&\sup_{k\ge0}\bar b(k)\le M_b<\infty,\hspace{0.12cm}
			\sum_{k=0}^{\infty}c(k)=\infty,\hspace{0.12cm}
			\sum_{k=0}^{\infty}c(k)^2<\infty;\\[1mm]
			\textup{(C6)}\hspace{0.1cm}
			&c(k)\equiv c_0>0,\quad
			\sum_{k=0}^{\infty}\bar b(k)^2<\infty,
		\end{align*}
		where $M_b>0$ and $c_0>0$ are constants, then the system
		\eqref{eq:system} under the protocol \eqref{eq:protocol} achieves mean square
		strong consensus and almost sure strong consensus.
	\end{corollary}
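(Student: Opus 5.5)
The plan is to reduce Corollary~\ref{cor:dt_strong_special} to Theorem~\ref{thm:dt_strong_consensus}. Under either (C5) or (C6) I will check condition (C4) and the gain condition \eqref{eq:cbar_A} for every nonzero eigenvalue. The spanning-tree hypothesis and Lemma~\ref{lem:lap_decomp} give $\Re(\lambda_i)>0$ for $i\ge2$. The corollary already assumes \eqref{eq:cbar_A}, so only (C4) has to be established.

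\emph{Case (C5).} Divergence $\sum_k c(k)=\infty$ is assumed. For the noise term I use $\bar b(k)\le M_b$:
\[
\sum_{k=0}^{\infty}c(k)^2\bar b(k)^2\le M_b^2\sum_{k=0}^{\infty}c(k)^2<\infty .
\]
This is exactly (C4). Moreover, $\sum_k c(k)^2<\infty$ forces $c(k)\to0$. Hence, as in the proof of Corollary~\ref{cor:dt_ms_weak_D1D3}, \eqref{eq:cbar_A} holds automatically with a suitable $k_0$, so the hypothesis is even redundant here.

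\emph{Case (C6).} With $c(k)\equiv c_0>0$ we get $\sum_k c(k)=\sum_k c_0=\infty$. The noise term satisfies
\[
\sum_k c(k)^2\bar b(k)^2=c_0^2\sum_k\bar b(k)^2<\infty ,
\]
so (C4) holds. In this case \eqref{eq:cbar_A} with $\bar c=c_0$ is a genuine restriction, namely $2\Re(\lambda_i)>(2d+1)|\lambda_i|^2c_0$, and it is supplied by the standing hypothesis.

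In both cases Theorem~\ref{thm:dt_strong_consensus} applies and yields mean square strong consensus and almost sure strong consensus. There is no real obstacle. The only point needing care is that in case (C6) the gain condition cannot be derived and must be taken from the assumption, while in case (C5) it follows from $c(k)\to0$.
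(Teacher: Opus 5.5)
Your proposal is correct and follows essentially the same route as the paper. Both verify \textup{(C4)}: under \textup{(C5)} via $\bar b(k)\le M_b$, and under \textup{(C6)} via $c(k)\equiv c_0$. Both then invoke Theorem~\ref{thm:dt_strong_consensus}. Your extra remark, that \eqref{eq:cbar_A} holds automatically under \textup{(C5)} because $c(k)\to0$, is correct but not needed, since the corollary already assumes that condition.
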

	
	\textbf{Proof.}
	Under either \textup{(C5)} or \textup{(C6)}, one has
	$\sum_{k=0}^{\infty}c(k)=\infty$ and
	$\sum_{k=0}^{\infty}c(k)^2\bar b(k)^2<\infty$, and hence condition
	\textup{(C4)} holds. Therefore, the conclusion follows directly from
	Theorem~\ref{thm:dt_strong_consensus}.
	\hfill$\qed$
	
	\begin{remark}
		Conditions \textup{(C2)}--\textup{(C3)} and
		\textup{(C5)}--\textup{(C6)} cover two representative settings in the
		literature. Conditions \textup{(C2)} and \textup{(C5)} correspond to the
		standard stochastic approximation setting, where the step size is
		diminishing and the noise variance is constant or uniformly bounded; see,
		e.g., \cite{LiZhang:09,huangmanton2010,xuzhangxie2012,wang2025stochastic}.
		Conditions \textup{(C3)} and \textup{(C6)} correspond to the fixed
		step-size setting with decaying privacy-noise variance, as considered in
		\cite{huang2012differentially,nozari2017differentially,fiore2019resilient,
			gao2019differentially,tian2025privacy}. Compared with these works, the
		present analysis further incorporates communication delays and
		time-varying privacy noises into a unified framework.
	\end{remark}
	
	\begin{corollary}
		\label{cor:dt_strong_unbounded}
		Assume that the communication graph contains a spanning tree. Suppose
		the control gain and the privacy noise scale are designed as
		$c(k)=c_0/(k+1)^\beta$ and $\bar b(k)=b_0(k+1)^\gamma$, respectively,
		where $c_0,b_0>0$. If $\beta\in(0,1]$, $\gamma<\beta-1/2$, and
		\eqref{eq:cbar_A} is satisfied for each nonzero eigenvalue $\lambda_i$ 
		of $L$  with $\lambda=\lambda_i$, then the system achieves mean square strong consensus and almost
		sure strong consensus. Consequently, it also achieves mean square weak
		consensus.
	\end{corollary}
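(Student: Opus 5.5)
The plan is to verify condition \textup{(C4)} of Theorem~\ref{thm:dt_strong_consensus} for the specific choices $c(k)=c_0/(k+1)^\beta$ and $\bar b(k)=b_0(k+1)^\gamma$, and then invoke that theorem. The remaining hypotheses of Theorem~\ref{thm:dt_strong_consensus}, namely the spanning tree and \eqref{eq:cbar_A} for each $\lambda_i$, $i\ge2$, are assumed directly in the corollary. The last sentence follows because mean square strong consensus implies weak consensus. Indeed,
\[
\mathbb E\|x_i(k)-x_j(k)\|^2\le 2\mathbb E\|x_i(k)-x^\star\|^2+2\mathbb E\|x_j(k)-x^\star\|^2\to0 .
\]

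First, divergence of the gain sum. Since $\beta\in(0,1]$, we have $c(k)=c_0(k+1)^{-\beta}\ge c_0(k+1)^{-1}$. Comparison with the harmonic series therefore gives $\sum_{k}c(k)=\infty$.

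Second, square summability of the weighted noise. We compute
\[
c(k)^2\bar b(k)^2=c_0^2b_0^2(k+1)^{-2\beta+2\gamma}=c_0^2b_0^2(k+1)^{-2(\beta-\gamma)} .
\]
The hypothesis $\gamma<\beta-1/2$ is equivalent to $2(\beta-\gamma)>1$. The $p$-series test then yields $\sum_k c(k)^2\bar b(k)^2<\infty$. Hence \textup{(C4)} holds, and Theorem~\ref{thm:dt_strong_consensus} gives both mean square and almost sure strong consensus.

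The only point needing care is \eqref{eq:cbar_A} itself. Since $c(k)\to0$, for any finite delay $d$ one can choose $k_0$ large so that $\bar c=\sup_{k\ge k_0}c(k)$ is small enough to make $\alpha>0$, exactly as in the proof of Corollary~\ref{cor:dt_ms_weak_D1D3}. So this condition is in fact automatic here, although the statement already assumes it. I would also remark that $\gamma$ may be positive when $\beta>1/2$, which means the noise scale is allowed to grow. There is no genuine obstacle: the whole argument reduces to the two elementary series tests above.
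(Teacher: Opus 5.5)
Your proposal is correct and essentially identical to the paper's proof: you verify \textup{(C4)} from $\sum_k c(k)=\infty$ (since $\beta\le 1$) and from the summability of $c_0^2b_0^2(k+1)^{-2(\beta-\gamma)}$ (since $2(\beta-\gamma)>1$), then invoke Theorem~\ref{thm:dt_strong_consensus} and note that mean square strong consensus implies weak consensus. Your side remark that \eqref{eq:cbar_A} holds automatically here because $c(k)\to0$ is also correct, though the statement already assumes it.
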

	
	\textbf{Proof.}
	Since $\beta\in(0,1]$, we have
	$\sum_{k=0}^{\infty}c(k)=c_0\sum_{k=0}^{\infty}(k+1)^{-\beta}=\infty$.
	Moreover,
	$
	c(k)^2\bar b(k)^2
	=
	c_0^2b_0^2(k+1)^{-2\beta+2\gamma}.
	$
	The condition $\gamma<\beta-1/2$ implies $2\beta-2\gamma>1$, and hence
	$
	\sum_{k=0}^{\infty}c(k)^2\bar b(k)^2
	=
	c_0^2b_0^2
	\sum_{k=0}^{\infty}(k+1)^{-2\beta+2\gamma}
	<\infty .
	$
	Therefore, \textup{(C4)} holds. The
	mean square strong consensus and almost sure strong consensus then follow
	from Theorem~\ref{thm:dt_strong_consensus}. Finally, mean square strong
	consensus implies mean square weak consensus.
	\hfill$\qed$

	\begin{remark}
		Corollary~\ref{cor:dt_strong_unbounded} shows that strong consensus can
		still be achieved even when the privacy noise increases over time. This
		relaxes the commonly imposed requirement that the privacy noise should be
		constant or decay with time; see, e.g.,
		\cite{huang2012differentially,nozari2017differentially,fiore2019resilient,
			gao2019differentially,tian2025privacy}.
	\end{remark}
	
	\begin{remark}
		To the best of our knowledge, the present result provides the first
		mean square and almost sure stability analysis for first-order delayed
		systems with injected Laplace privacy noises. In essence, this addresses
		the stability problem of delayed systems driven by additive noises. The
		analysis is based on the difference resolvent function method and a
		backstepping technique.
	\end{remark}
	
	\section{Privacy analysis}\label{sec:privacy}
	
	For each observation time $k=0,\dots,T$, the released signal $\theta(k)$ is
	generated by adding Laplace noise to the delayed state $x(k-d)$. Hence, for
	a delayed initial history $H$ and an observation sequence
	$\Theta^T=\{\theta(k)\}_{k=0}^{T}$, we define the delayed-state trajectory
	aligned with the observations by
\[
\rho(H,\Theta^T)(k)
:=
x^{H,\Theta}(k-d),
\qquad k=0,\dots,T.
\]
	When $k-d<0$, the value $x^{H,\Theta}(k-d)$ is taken from the delayed
	initial history $H$.
	
	To quantify the effect of changing the protected delayed history on the
	observable trajectory, we introduce the following sensitivity notion and then establish a
	corresponding sensitivity bound.
	\begin{definition}
		\label{def:sensitivity_delay_global}
		The  sensitivity with respect to the delayed initial
		history at time $k$ is defined by
		\begin{equation}\label{eq:def_pointwise_sens_global}
			S(k)=\hspace{-0.1cm}
			\sup_{\substack{
					H^a,H^b\in\mathcal{H}\\
					\Theta^T\in \mathcal{O}
			}}
			\hspace{-0.1cm}	\left\|
			\rho(H^a,\Theta^T)(k)
			\hspace{-0.05cm}-\hspace{-0.05cm}
			\rho(H^b,\Theta^T)(k)
			\right\|_1 .
		\end{equation}
	\end{definition}
	
	\begin{theorem}
		\label{thm:history_sensitivity_global}
		Suppose that there exists $k_0\ge0$ such that
		$
		\bar c:=\sup_{k\ge k_0}c(k)<\infty,
		$ 		and
		\begin{equation}\label{eq:cbar_priv_global}
			\alpha_H:=
			\min_{1\le i\le N}
			\left\{
			2\kappa_i-(2d+1)\kappa_i^2\bar c
			\right\}>0.
		\end{equation}
		Then, for $k=0,1,\dots,T$, there exist constants $M>0$ and $\varrho>0$ such that
		\begin{equation}\label{eq:S_exp_global}
			S(k)
			\le
			M\Delta_H
			\exp\!\Big(
			-\varrho\sum_{s=0}^{k-d-1}c(s)
			\Big),
		\end{equation}
		where the empty sum is taken as zero.
	\end{theorem}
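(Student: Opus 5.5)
The plan is to fix the observation sequence $\Theta^T$ and show that, once $\Theta^T$ is fixed, the two delayed-state trajectories $x^{H^a,\Theta}$ and $x^{H^b,\Theta}$ differ only through the solution of a \emph{homogeneous} scalar delayed difference equation of the form \eqref{eq:dt_hom}. Lemma~\ref{lem:dt_fundamental_exp_A} can then be applied directly. Throughout, the supremum in \eqref{eq:def_pointwise_sens_global} is read over $\Delta_H$-adjacent pairs $H^a\sim H^b$, which is the intended meaning for the privacy analysis.

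\textbf{Step 1: decoupling under a fixed observation.} Since $\theta_j(k)$ is given, substituting \eqref{eq:z_def} into \eqref{eq:protocol} and using $\sum_j a_{ij}=\kappa_i$ gives, for each Agent $i$,
\[
x_i(k+1)=x_i(k)-c(k)\kappa_i\,x_i(k-d)+c(k)\sum_{j=1}^N a_{ij}\theta_j(k).
\]
The forcing term is identical for $H^a$ and $H^b$. Hence the difference $e_i(k):=x_i^{H^a,\Theta}(k)-x_i^{H^b,\Theta}(k)$ satisfies
\[
e_i(k+1)=e_i(k)-c(k)\kappa_i e_i(k-d),
\]
componentwise in $\mathbb R^n$, which is exactly \eqref{eq:dt_hom} with $\lambda=\kappa_i$. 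The agents therefore decouple. For $i\neq i_0$ the initial history of $e_i$ is zero, so $e_i\equiv 0$. Only $e_{i_0}$ matters, and by \eqref{eq:dt_hom_rep},
\[
e_{i_0}(k)=\sum_{r=-d}^{0}\Gamma_h(k,r)\,e_{i_0}(r),\qquad k\ge 0.
\]

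\textbf{Step 2: applying the decay estimate.} Condition \eqref{eq:cbar_priv_global} forces $\kappa_i>0$. With $\lambda=\kappa_i$ real we have $\Re(\lambda)=|\lambda|=\kappa_i$, so $\alpha$ in \eqref{eq:cbar_A} equals $2\kappa_i-(2d+1)\kappa_i^2\bar c\ge\alpha_H>0$. Lemma~\ref{lem:dt_fundamental_exp_A} therefore gives constants $C_{h,i}$ and $\varrho_i>0$ with
\[
|\Gamma_h(k,r)|\le \sqrt{C_{h,i}}\exp\Big(-\tfrac{\varrho_i}{2}\sum_{s=0}^{k-1}c(s)\Big).
\]
Using the triangle inequality in $\|\cdot\|_1$ together with \eqref{eq:global_adj_history},
\[
\|e_{i_0}(k)\|_1\le \sqrt{C_{h,i_0}}\,\Delta_H\exp\Big(-\tfrac{\varrho_{i_0}}{2}\sum_{s=0}^{k-1}c(s)\Big).
\]

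\textbf{Step 3: the delay shift and the choice of constants.} By definition, $\rho(H,\Theta^T)(k)=x(k-d)$, so
\[
\|\rho(H^a,\Theta^T)(k)-\rho(H^b,\Theta^T)(k)\|_1=\|e_{i_0}(k-d)\|_1 .
\]
If $k-d\le 0$, this difference is a single term of the history difference, so it is bounded by $\Delta_H$; this is consistent with the empty-sum convention. If $k-d>0$, Step 2 applied at time $k-d$ gives the exponent $\sum_{s=0}^{k-d-1}c(s)$. Taking
\[
M:=\max\Big\{1,\max_i\sqrt{C_{h,i}}\Big\},\qquad \varrho:=\tfrac12\min_i\varrho_i ,
\]
makes the bound uniform over $i_0$, over the adjacent pair, and over $\Theta^T$, which yields \eqref{eq:S_exp_global}.

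The main obstacle is Step 1: recognizing that conditioning on the released messages removes both the coupling between agents and the noise from the difference dynamics. Once this is seen, the rest is a direct invocation of Lemma~\ref{lem:dt_fundamental_exp_A}, plus the bookkeeping of the $d$-step shift between $\rho(\cdot)(k)$ and the state index.
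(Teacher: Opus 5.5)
Your proof is correct and follows the paper's argument essentially step by step. Fixing $\Theta^T$ reduces the difference to the decoupled homogeneous recursion $e_i(k+1)=e_i(k)-\kappa_i c(k)e_i(k-d)$, only $e_{i_0}$ survives, and Lemma~\ref{lem:dt_fundamental_exp_A} with $\lambda=\kappa_{i_0}$ gives the decay, with constants made uniform over agents by a max/min; your bookkeeping is in fact slightly more careful than the paper's, since you halve the rate when passing from the squared estimate on $\Gamma_h$ to $|\Gamma_h|$ and take $M\ge 1$ to cover the indices $k\le d$, where history values appear.
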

	
	\textbf{Proof.}
	Define $K=\operatorname{diag}(\kappa_1,\dots,\kappa_N)$, where $\kappa_i:=\sum_{j=1}^{N}a_{ij}.$
	For two adjacent delayed initial histories $H^a$ and $H^b$, and a fixed
	observation $\Theta^T$, let
	\[
	e(k):=x^{H^a,\Theta}(k)-x^{H^b,\Theta}(k),
	\quad k=-d,\dots,T.
	\]
	Since the observation $\Theta^T$ for $H^a$ and $H^b$ are the same, it follows from \eqref{eq:stack_system} that
	\[
	e(k+1)=e(k)-c(k)(K\otimes I_n)e(k-d),
	\quad k=0,\dots,T-1.
	\]
	Equivalently,
	\[
	e_i(k+1)=e_i(k)-\kappa_i c(k)e_i(k-d),
	\quad i=1,\dots,N.
	\]
	Since $H^a$ and $H^b$ differ only in the $i_0$th history, we have
	$
	e_i(r)=0$ for $i\neq i_0, r=-d,\dots,0.
	$
	By the above decoupled recursion, it follows that
	$
	e_i(k)\equiv0, i\neq i_0.
	$
	Therefore, it suffices to estimate
	\[
	e_{i_0}(k+1)
	=
	e_{i_0}(k)-\kappa_{i_0}c(k)e_{i_0}(k-d).
	\]
	By Lemma~\ref{lem:dt_fundamental_exp_A} applied with $\lambda=\kappa_{i_0}$, \eqref{eq:cbar_priv_global} implies that there exist constants $M_{i_0}>0$ and $\varrho_{i_0}>0$ such that
	\[
	\|e_{i_0}(k-d)\|_1
	\le
	M_{i_0}
	\exp\!\Big(
	-\varrho_{i_0}\sum_{s=0}^{k-d-1}c(s)
	\Big)
	\sum_{r=-d}^{0}\|e_{i_0}(r)\|_1.
	\]
	Let
	$
	M:=\max_{1\le i\le N}M_i,$ $
	\varrho:=\min_{1\le i\le N}\varrho_i.
	$
	Then,
	\[
	\begin{aligned}
		\|e(k-d)\|_1
		&=
		\|e_{i_0}(k-d)\|_1\\
		&\le
		M\exp\!\Big(
		-\varrho\sum_{s=0}^{k-d-1}c(s)
		\Big)
		\sum_{r=-d}^{0}\|e_{i_0}(r)\|_1\\
		&\le
		M\Delta_H
		\exp\!\Big(
		-\varrho\sum_{s=0}^{k-d-1}c(s)
		\Big).
	\end{aligned}
	\]
	Taking the supremum over all adjacent $H^a\sim H^b$ and all transcripts $\Theta^T$, we get \eqref{eq:S_exp_global}.
	$\hfill\qed$
	
	Based on the above sensitivity bound, we now establish the differential
	privacy guarantee of the proposed mechanism over a finite time horizon.
	\begin{theorem}
		\label{thm:history_dp_global}
		Under the conditions of Theorem~\ref{thm:history_sensitivity_global},
		the mechanism $\mathcal M^T$ is $\varepsilon(T)$-differentially private
		over the time horizon $T$, where
		\begin{equation}\label{eq:eps_global}
			\varepsilon(T)
			=
			\sum_{k=0}^{T}\frac{S(k)}{b_{\min}(k)},
			\quad
			b_{\min}(k):=\min_{1\le j\le N} b_j(k).
		\end{equation}
		In particular,
		\begin{equation}\label{eq:eps_global_exp}
			\begin{aligned}
				\varepsilon(T)
				&\hspace{-0.05cm}\le\hspace{-0.05cm}
				M\Delta_H
				\sum_{k=0}^{T}
				\frac{1}{b_{\min}(k)}
				\exp\!\left(\hspace{-0.05cm}
				-\varrho
				\sum_{s=0}^{k-d-1}c(s)\hspace{-0.05cm}
				\right).
			\end{aligned}
		\end{equation}
	\end{theorem}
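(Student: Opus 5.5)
The argument is the standard Laplace-mechanism one, applied to the density of the whole transcript: write this density as a product of conditional densities, and control the ratio for adjacent histories through the sensitivity $S(k)$. Fix $\Delta_H$-adjacent histories $H^a\sim H^b$ and write $\Theta^T=\{\theta(k)\}_{k=0}^T$.

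The first step is to express the law of $\mathcal M^T(H)$ through a density on $(\mathbb R^{nN})^{T+1}$. The key structural fact is that the dynamics \eqref{eq:stack_system} can be driven by the released messages instead of the noises. Indeed, by \eqref{eq:z_def} and \eqref{eq:protocol},
\[
u_i(k)=c(k)\sum_j a_{ij}\bigl(\theta_j(k)-x_i(k-d)\bigr).
\]
Hence, given $H$ and $\theta(0),\dots,\theta(k-1)$, the state $x(k-d)$ is a deterministic function. This function is exactly $\rho(H,\Theta^T)(k)$, which depends only on the first $k$ messages. Conditional on the past, $\theta(k)=\rho(H,\Theta^T)(k)+w(k)$, and $w(k)$ is independent of the past with independent $\mathrm{Lap}(0,b_j(k))$ entries. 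The chain rule therefore gives the joint density
\[
p_H(\Theta^T)=\prod_{k=0}^{T}\prod_{j=1}^N\prod_{\ell=1}^n\frac{1}{2b_j(k)}\exp\Bigl(-\frac{|\theta_j^{(\ell)}(k)-\rho(H,\Theta^T)(k)_j^{(\ell)}|}{b_j(k)}\Bigr).
\]

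The second step bounds the density ratio pointwise in $\Theta^T$. Using the triangle inequality $|\alpha-\beta_a|-|\alpha-\beta_b|\ge -|\beta_a-\beta_b|$ in each factor, together with $b_j(k)\ge b_{\min}(k)$,
\[
\frac{p_{H^a}(\Theta^T)}{p_{H^b}(\Theta^T)}\le\exp\Bigl(\sum_{k=0}^T\frac{\|\rho(H^a,\Theta^T)(k)-\rho(H^b,\Theta^T)(k)\|_1}{b_{\min}(k)}\Bigr)\le\exp\Bigl(\sum_{k=0}^T\frac{S(k)}{b_{\min}(k)}\Bigr).
\]
The last inequality is Definition~\ref{def:sensitivity_delay_global}, with the supremum read over adjacent pairs. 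Integrating over any measurable $\mathcal O$ gives $\mathbb P\{\mathcal M^T(H^a)\in\mathcal O\}\le e^{\varepsilon(T)}\mathbb P\{\mathcal M^T(H^b)\in\mathcal O\}$ with $\varepsilon(T)$ as in \eqref{eq:eps_global}. Finally, inserting the bound \eqref{eq:S_exp_global} of Theorem~\ref{thm:history_sensitivity_global} into each term yields \eqref{eq:eps_global_exp}.

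The main obstacle is making the first step rigorous. The point to verify is that, with $\Theta^T$ held fixed, the map $H\mapsto\rho(H,\Theta^T)(k)$ is well defined and depends only on $\theta(0),\dots,\theta(k-1)$. This is what makes the sequential conditioning produce the displayed product density. The check goes by induction on $k$: for $k\le d$, $\rho(H,\Theta^T)(k)$ is read directly from $H$; for $k>d$, $x(k-d)$ is computed recursively from earlier states and messages. The same induction also justifies the use of $e(k)$ in Theorem~\ref{thm:history_sensitivity_global}, where both histories are fed the same transcript.
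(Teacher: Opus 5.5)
Your proposal is correct and follows the paper's proof essentially step for step. The transcript density is a product of Laplace factors centered at $\rho(H,\Theta^T)(k)$, and the triangle inequality with $b_j(k)\ge b_{\min}(k)$ bounds the likelihood ratio by $\exp\bigl(\sum_{k=0}^{T}S(k)/b_{\min}(k)\bigr)$. Integrating over $\mathcal O$ then gives the privacy inequality, and substituting \eqref{eq:S_exp_global} yields \eqref{eq:eps_global_exp}. Your chain-rule argument adds something useful: $x(k-d)$ is determined by $H$ and $\theta(0),\dots,\theta(k-d-1)$, while $w(k)$ is independent of these, which justifies the product-form joint density that the paper only asserts.
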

	
	\textbf{Proof.}
	Fix two adjacent delayed histories $H^a\sim H^b$ and an arbitrary
	observation sequence
	$
	\Theta^T=\{\Theta(k)\}_{k=0}^{T}.
	$
	For a given delayed history $H$, the corresponding trajectory is denoted by
	$\rho(H,\Theta^T)$. By the density of Laplace noise, the joint density of
	$\Theta^T$ under $H$ can be written as
	\[
	\begin{aligned}
		p(\Theta^T\mid H)
		&=
		\prod_{k=0}^{T}
		\prod_{j=1}^{N}
		\frac{1}{\bigl(2b_j(k)\bigr)^n} \\
		&\quad \times
		\exp\!\left(
		-\frac{
			\left\|
			\Theta_j(k)-\rho_j(H,\Theta^T)(k)
			\right\|_1
		}{b_j(k)}
		\right).
	\end{aligned}
	\]
	It follows that
	\[
	\begin{aligned}
		\frac{p(\Theta^T\mid H^a)}
		{p(\Theta^T\mid H^b)}
		&\le
		\exp\!\Bigg(
		\sum_{k=0}^{T}
		\sum_{j=1}^{N}
		\frac{1}{b_j(k)}
		\\
		&\quad
		\times
		\left\|
		\rho_j(H^a,\Theta^T)(k)
		-
		\rho_j(H^b,\Theta^T)(k)
		\right\|_1
		\Bigg).
	\end{aligned}
	\]
	Since $b_j(k)\ge b_{\min}(k)$ for all $j$, we have
	\[
	\begin{aligned}
		&\sum_{j=1}^{N}
		\frac{
			\left\|
			\rho_j(H^a,\Theta^T)(k)
			-
			\rho_j(H^b,\Theta^T)(k)
			\right\|_1
		}{b_j(k)}
		\\
		&\quad\le
		\frac{
			\left\|
			\rho(H^a,\Theta^T)(k)
			-
			\rho(H^b,\Theta^T)(k)
			\right\|_1
		}{b_{\min}(k)}
		\\
		&\quad\le
		\frac{S(k)}{b_{\min}(k)} .
	\end{aligned}
	\]
	Thus,
	\[
	\begin{aligned}
		p(\Theta^T\mid H^a)
		&\le
		\exp\!\left(
		\sum_{k=0}^{T}
		\frac{S(k)}{b_{\min}(k)}
		\right)
		p(\Theta^T\mid H^b)
		\\
		&=
		e^{\varepsilon(T)}
		p(\Theta^T\mid H^b).
	\end{aligned}
	\]
	Therefore, for any measurable set
	$
	\mathcal O\subseteq(\mathbb R^{nN})^{T+1},
	$
	we obtain
	\[
	\begin{aligned}
		\mathbb P\{\mathcal M^T(H^a)\in\mathcal O\}
		&=
		\int_{\mathcal O}
		p(\Theta^T\mid H^a)\,d\Theta^T
		\\
		&\le
		e^{\varepsilon(T)}
		\int_{\mathcal O}
		p(\Theta^T\mid H^b)\,d\Theta^T
		\\
		&=
		e^{\varepsilon(T)}
		\mathbb P\{\mathcal M^T(H^b)\in\mathcal O\}.
	\end{aligned}
	\]
	This proves that $\mathcal M^T$ is $\varepsilon(T)$-differentially
	private. Finally, \eqref{eq:eps_global_exp} follows directly from
	\eqref{eq:S_exp_global}.
	\hfill$\qed$

	We next focus on how to design the communication weight sequence $c(k)$ and the noise parameter $b_i(k)$ to satisfy the predefined $\epsilon^\star$-differential privacy over the infinite time horizon. 
	
	\begin{theorem}
		\label{thm:inf_history_dp}
		Suppose the conditions of Theorem~\ref{thm:history_sensitivity_global} hold.
		Let the communication weight and the noise scale parameters be designed as
		\begin{equation}\label{eq:param_design}
			c(k)=\frac{a_1}{(k+a_2)^\beta},
			\quad
			b_j(k)=\underline{b}(k+a_2)^\gamma,
		\end{equation}
		where $a_1,a_2,\underline{b}>0$ and $\gamma\ge0$.
		Then, for any predefined privacy level $\epsilon^\star>0$, the mechanism
		$\mathcal M^\infty$ achieves $\epsilon^\star$-differential privacy over
		the infinite time horizon if one of the following conditions holds:
		
		1) $\beta=1$, $\varrho a_1+\gamma>1$, and
		\begin{align}\label{eq:eps_star_case1}
			\varepsilon(\infty)
			\le
			\sum_{k=0}^{d}
			\frac{\Delta_H}{\underline{b}(k+a_2)^\gamma}
			+
			\frac{
				M\Delta_H a_2^{1-\gamma}
			}{
				\underline{b}(\varrho a_1+\gamma-1)
			}
			\le \epsilon^\star .
		\end{align}
		
		2) $0<\beta<1$ and
		\begin{align}\label{eq:eps_star_case2}
			\varepsilon(\infty)
			\le\;&
			\sum_{k=0}^{d}
			\frac{\Delta_H}{\underline{b}(k+a_2)^\gamma}
			\notag\\
			&+
			\frac{
				M\Delta_H
				\exp\!\left(
				\frac{\varrho a_1 a_2^{1-\beta}}{1-\beta}
				\right)
			}{
				\underline{b}(1-\beta)
			}
			\left(
			\frac{1-\beta}{\varrho a_1}
			\right)^{\frac{1-\gamma}{1-\beta}}
			\notag\\
			&\times
			\mathcal I\!\left(
			\frac{1-\gamma}{1-\beta},
			\frac{\varrho a_1 a_2^{1-\beta}}{1-\beta}
			\right)
			\le \epsilon^\star ,
		\end{align}
		where $\mathcal I(\cdot,\cdot)$ is the upper incomplete gamma function.
	\end{theorem}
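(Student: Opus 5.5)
Our plan is to start from the finite-horizon bound of Theorem~\ref{thm:history_dp_global} and let $T\to\infty$. Since every summand $S(k)/b_{\min}(k)$ is nonnegative, $\varepsilon(T)$ is nondecreasing in $T$. It therefore suffices to show that $\varepsilon(\infty):=\lim_{T\to\infty}\varepsilon(T)$ is bounded by the explicit expressions in \eqref{eq:eps_star_case1} and \eqref{eq:eps_star_case2}. Differential privacy for $\mathcal M^\infty$ then follows: for cylinder sets $\mathcal O$ depending on finitely many coordinates, the inequality holds with $e^{\varepsilon(T)}\le e^{\epsilon^\star}$, and it extends to all measurable sets of $(\mathbb R^{nN})^{\mathbb Z_{\ge0}}$ by a monotone class (or $\pi$--$\lambda$) argument.

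We split the sum into the initial block $k=0,\dots,d$ and the tail $k\ge d+1$. For $k\le d$ we have $k-d\le 0$, so $\rho(H,\Theta^T)(k)=x(k-d)$ is read directly from the history. Hence $S(k)\le\sum_{r}\|x_{i_0,a}(r)-x_{i_0,b}(r)\|_1\le\Delta_H$, and together with $b_{\min}(k)=\underline b(k+a_2)^\gamma$ this gives the first finite sum. For $k\ge d+1$ we use \eqref{eq:S_exp_global} and write $m=k-d-1\ge0$. We then lower-bound $\sum_{s=0}^{m}c(s)$ by the integral $\int_{0}^{m+1}a_1(x+a_2)^{-\beta}dx$, using the fact that $(x+a_2)^{-\beta}$ is decreasing. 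We also bound $1/b_{\min}(k)\le 1/(\underline b (m+1+a_2)^\gamma)$, since $k\ge m+1$ and $\gamma\ge0$.

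For $\beta=1$ the integral equals $a_1\ln\frac{m+1+a_2}{a_2}$, so the exponential factor is at most $\bigl(a_2/(m+1+a_2)\bigr)^{\varrho a_1}$. The tail is then bounded by
\[
\frac{M\Delta_H a_2^{\varrho a_1}}{\underline b}\sum_{m\ge0}(m+1+a_2)^{-(\varrho a_1+\gamma)}\le\frac{M\Delta_H a_2^{\varrho a_1}}{\underline b}\int_{a_2}^{\infty}t^{-(\varrho a_1+\gamma)}dt=\frac{M\Delta_H a_2^{1-\gamma}}{\underline b(\varrho a_1+\gamma-1)},
\]
where the series converges because $\varrho a_1+\gamma>1$.

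For $0<\beta<1$ the integral equals $\frac{a_1}{1-\beta}\bigl((m+1+a_2)^{1-\beta}-a_2^{1-\beta}\bigr)$. The tail is then dominated by
\[
\frac{M\Delta_H}{\underline b}\exp\!\Big(\tfrac{\varrho a_1a_2^{1-\beta}}{1-\beta}\Big)\int_{a_2}^\infty t^{-\gamma}\exp\!\Big(-\tfrac{\varrho a_1}{1-\beta}t^{1-\beta}\Big)dt.
\]
The substitution $u=\frac{\varrho a_1}{1-\beta}t^{1-\beta}$ turns this integral into $\frac{1}{1-\beta}\bigl(\frac{1-\beta}{\varrho a_1}\bigr)^{\frac{1-\gamma}{1-\beta}}\mathcal I\bigl(\frac{1-\gamma}{1-\beta},\frac{\varrho a_1a_2^{1-\beta}}{1-\beta}\bigr)$, which matches \eqref{eq:eps_star_case2}. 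No condition on $\gamma$ is needed here, because the stretched exponential dominates any power.

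The main obstacle is justifying the sum-to-integral comparisons with the correct index shifts. The summand $g(t)=t^{-\gamma}\exp(-\kappa t^{1-\beta})$ is decreasing in $t$ since $\gamma\ge0$, so $\sum_{m\ge0}g(m+1+a_2)\le\int_{a_2}^\infty g$, and the same reasoning gives the lower bound on $\sum_s c(s)$. A second point to check is that Theorem~\ref{thm:history_sensitivity_global} is applied uniformly: $M$ and $\varrho$ depend only on the initial segment of $c$ and on $\bar c$, not on $T$, so a single pair of constants serves for every horizon. We must also confirm that \eqref{eq:cbar_priv_global} holds for the designed $c(k)$ with some $k_0$. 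This is automatic because $c(k)\to0$.
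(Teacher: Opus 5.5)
Your proposal is correct and follows the paper's proof essentially step by step: the split at $k=d$ with $S(k)\le\Delta_H$, the integral lower bound $\sum_{s=0}^{k-d-1}c(s)\ge\int_0^{k-d}a_1(x+a_2)^{-\beta}dx$, the estimate $(k+a_2)^{-\gamma}\le(k-d+a_2)^{-\gamma}$, the sum-to-integral comparison for a nonincreasing summand, and the substitution producing $\mathcal I(\cdot,\cdot)$ are exactly what the paper does. Your extra step passing from cylinder sets to $\mathcal M^\infty$ makes explicit something the paper leaves implicit; note that the monotone class theorem is what does the job here (cylinder sets form an algebra, and the inequality passes to monotone limits), whereas a $\pi$--$\lambda$ argument would not work directly because the class of sets satisfying the inequality is not closed under complements.
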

	
	\textbf{Proof.}
	By Theorem~\ref{thm:history_dp_global}, for any finite time horizon $T$,
	the privacy loss satisfies
	$
	\varepsilon(T)
	=
	\sum_{k=0}^{T}\frac{S(k)}{b_{\min}(k)} .
	$
	Under \eqref{eq:param_design}, we have
	$
	b_{\min}(k)=\underline{b}(k+a_2)^\gamma .
	$
	If the infinite series is finite, then
	$
	\varepsilon(\infty)
	:=
	\lim_{T\to\infty}\varepsilon(T)
	=
	\sum_{k=0}^{\infty}
	\frac{S(k)}{b_{\min}(k)} .
	$
	Since $S(k)\le \Delta_H$ for $0\le k\le d$, we split the sum as
	\begin{equation}\label{eq:sum_split_dp}
		\varepsilon(\infty)
		\le
		\sum_{k=0}^{d}
		\frac{\Delta_H}{b_{\min}(k)}
		+
		\sum_{k=d+1}^{\infty}
		\frac{S(k)}{b_{\min}(k)} .
	\end{equation}
	For $k\ge d+1$, substituting \eqref{eq:param_design} into the sensitivity
	bound \eqref{eq:S_exp_global} yields
	\begin{align}\label{eq:sum_integral_bound}
		\sum_{k=d+1}^{\infty}
		\frac{S(k)}{b_{\min}(k)}
		&\le\;
		\sum_{k=d+1}^{\infty}
		\frac{
			M\Delta_H
		}{
			\underline{b}(k+a_2)^\gamma
		}
		\notag\\
		&\times
		\exp\!\left(
		-\varrho a_1
		\sum_{s=0}^{k-d-1}
		\frac{1}{(s+a_2)^\beta}
		\right).
	\end{align}
	Since $(s+a_2)^{-\beta}$ is nonincreasing, we have
	$
	\sum_{s=0}^{k-d-1}
	\frac{1}{(s+a_2)^\beta}$ $
	\ge
	\int_{0}^{k-d}
	\frac{1}{(x+a_2)^\beta}\,dx .
	$
	
	\textit{Case 1:} $\beta=1$.
	In this case,
	$
	\int_{0}^{k-d}\frac{1}{x+a_2}\,dx
	=
	\ln\!\left(\frac{k-d+a_2}{a_2}\right),
	$
	and hence, 
	$
	\exp\!\left(
	-\varrho a_1
	\int_{0}^{k-d}
	\frac{1}{x+a_2}\,dx
	\right)
	=
	\left(
	\frac{a_2}{k-d+a_2}
	\right)^{\varrho a_1}.
	$
	Noticing  $\gamma\ge0$, we have
	$
	(k+a_2)^{-\gamma}
	\le
	(k-d+a_2)^{-\gamma}.
	$
	Thus,
	\begin{align}
		\sum_{k=d+1}^{\infty}
		\frac{S(k)}{b_{\min}(k)}
		\le\;&
		\frac{M\Delta_H a_2^{\varrho a_1}}{\underline{b}}
		\sum_{k=d+1}^{\infty}
		(k-d+a_2)^{-(\varrho a_1+\gamma)}
		\notag\\
		\le\;&
		\frac{M\Delta_H a_2^{\varrho a_1}}{\underline{b}}
		\int_{0}^{\infty}
		(x+a_2)^{-(\varrho a_1+\gamma)}\,dx
		\notag\\
		=\;&
		\frac{
			M\Delta_H a_2^{1-\gamma}
		}{
			\underline{b}(\varrho a_1+\gamma-1)
		},
	\end{align}
	where the last equality holds because $\varrho a_1+\gamma>1$.
	Combining this estimate with \eqref{eq:sum_split_dp} and
	\eqref{eq:param_design} gives \eqref{eq:eps_star_case1}.
	
	\textit{Case 2:} $0<\beta<1$.
	In this case, noticing that 
	$
	\int_{0}^{k-d}
	\frac{1}{(x+a_2)^\beta}\,dx
	=
	\frac{
		(k-d+a_2)^{1-\beta}
		-
		a_2^{1-\beta}
	}{
		1-\beta
	},
	$
we have
	\begin{align}
		\sum_{k=d+1}^{\infty}
		\frac{S(k)}{b_{\min}(k)}
		&	\le\;
		\frac{M\Delta_H}{\underline{b}}
		\exp\!\left(
		\frac{\varrho a_1 a_2^{1-\beta}}{1-\beta}
		\right)
		\notag\\
		&\times
		\sum_{k=d+1}^{\infty}
		(k-d+a_2)^{-\gamma}
		\notag\\
		&\times
		\exp\!\left(
		-\frac{\varrho a_1}{1-\beta}
		(k-d+a_2)^{1-\beta}
		\right).
	\end{align}
	Since $\gamma\ge0$, the function
	$
	x\mapsto
	(x+a_2)^{-\gamma}
	\exp\!\big(
	-\frac{\varrho a_1}{1-\beta}
	(x+a_2)^{1-\beta}
	\big)
	$
	is nonincreasing on $[0,\infty)$. Hence,
	\begin{align}
		\sum_{k=d+1}^{\infty}
		\frac{S(k)}{b_{\min}(k)}
		&\le\;
		\frac{M\Delta_H}{\underline{b}}
		\exp\!\left(
		\frac{\varrho a_1 a_2^{1-\beta}}{1-\beta}
		\right)
		\notag\\
		&\times
		\int_{0}^{\infty}
		(x+a_2)^{-\gamma}
		\notag\\
		&\times
		\exp\!\left(
		-\frac{\varrho a_1}{1-\beta}
		(x+a_2)^{1-\beta}
		\right)\,dx .
	\end{align}
	Let
	$
	y=
	\frac{\varrho a_1}{1-\beta}
	(x+a_2)^{1-\beta}.
	$
	Then,
	$
	dx
	=
	\frac{1}{1-\beta}
	\left(
	\frac{1-\beta}{\varrho a_1}
	\right)^{\frac{1}{1-\beta}}
	y^{\frac{\beta}{1-\beta}}\,dy .
	$
	It follows that
	\begin{align}
		&\int_{0}^{\infty}
		(x+a_2)^{-\gamma}
		\exp\!\left(
		-\frac{\varrho a_1}{1-\beta}
		(x+a_2)^{1-\beta}
		\right)\,dx
		\notag\\
		&\quad =
		\frac{1}{1-\beta}
		\left(
		\frac{1-\beta}{\varrho a_1}
		\right)^{\frac{1-\gamma}{1-\beta}}
		\mathcal I\!\left(
		\frac{1-\gamma}{1-\beta},
		\frac{\varrho a_1 a_2^{1-\beta}}{1-\beta}
		\right).
	\end{align}
	Hence,
	\begin{align}
		\sum_{k=d+1}^{\infty}
		\frac{S(k)}{b_{\min}(k)}
		\le\;&
		\frac{
			M\Delta_H
			\exp\!\left(
			\frac{\varrho a_1 a_2^{1-\beta}}{1-\beta}
			\right)
		}{
			\underline{b}(1-\beta)
		}
		\left(
		\frac{1-\beta}{\varrho a_1}
		\right)^{\frac{1-\gamma}{1-\beta}}
		\notag\\
		&\times
		\mathcal I\!\left(
		\frac{1-\gamma}{1-\beta},
		\frac{\varrho a_1 a_2^{1-\beta}}{1-\beta}
		\right).
	\end{align}
	Combining this estimate with \eqref{eq:sum_split_dp} and
	\eqref{eq:param_design} gives \eqref{eq:eps_star_case2}.
	
	Consequently, if the corresponding upper bound in
	\eqref{eq:eps_star_case1} or \eqref{eq:eps_star_case2} is no larger than
	$\epsilon^\star$, then
	$
	\varepsilon(\infty)\le \epsilon^\star .
	$
	Therefore, $\mathcal M^\infty$ achieves $\epsilon^\star$-differential
	privacy over the infinite time horizon. From the privacy perspective, this
	condition can be satisfied by choosing $\underline{b}$ sufficiently large.
	\hfill$\qed$
	
	\section{Simulation}
	\label{sec:simulation}
	
	In this section, a numerical example will be given to illustrate the proposed
	differentially private consensus algorithm for delayed initial histories.
	We consider a network of five agents with communication delay $d=1$. The
	directed communication graph is shown in Fig.~\ref{fig:sim_topology}. The
	edge weights are selected according to the convention that $a_{ij}>0$
	means that Agent $i$ receives information from Agent $j$. The graph is
	strongly connected and hence contains a directed spanning tree. Moreover,
	each agent has two incoming neighbors with weight $0.50$, which gives
	$\kappa_i=1.00$ for all $i=1,\dots,5$.
	
	\begin{figure}[t]
		\centering
		\includegraphics[width=0.90\linewidth]{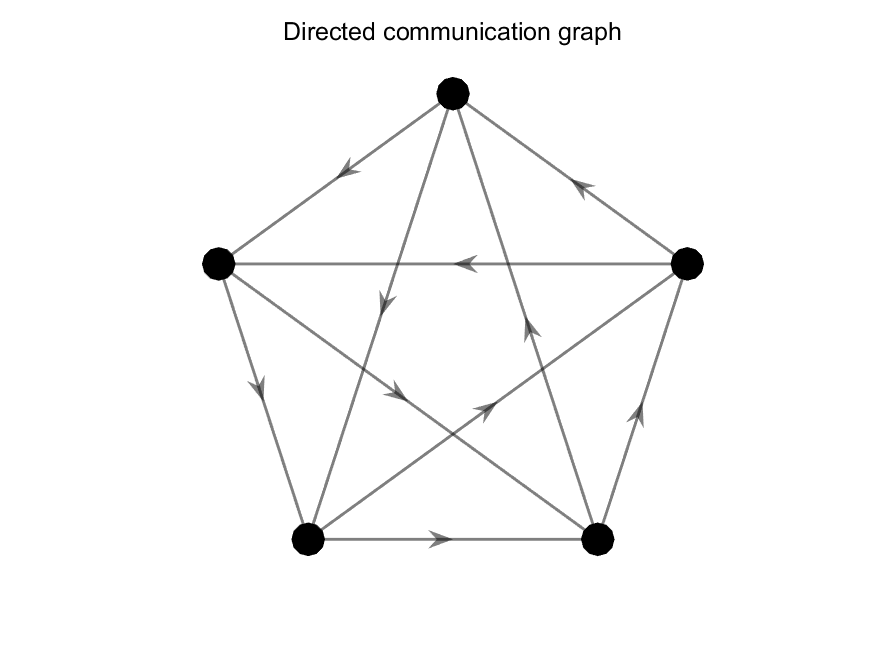}
		\caption{Directed communication graph with a spanning tree.}
		\label{fig:sim_topology}
	\end{figure}
	
	The delayed initial history is chosen as
	$x(-1)=[-1.20,\,1.80,\,2.00,\,-1.60,\,2.00]^{\top}$ and
	$x(0)=[-2.00,\,1.40$, $\,2.10,\,-1.20,\,1.80]^{\top}$. The control gain is
	$c(k)=a_1/(k+a_2)^\beta$, where $a_1=20$, $a_2=100$, and
	$\beta=1$. The privacy-noise scale is selected as
	$b_j(k)=\underline b(k+a_2)^\gamma$, where $\underline b=0.27$ and
	$\gamma=0.05$ for all $j=1,\dots,5$. The prescribed privacy level is set
	as $\epsilon^\star=4.00$. 
	
	We next verify the conditions used in the theoretical results. Since $c(k)$ is nonincreasing, we have
	$\bar c=\sup_{k\ge0}c(k)=c(0)=0.20$. Numerical calculation shows that, for
	each nonzero eigenvalue $\lambda_i$ of $L$,
	$
	\alpha_i
	:=
	2\operatorname{Re}(\lambda_i)
	-
	(2d+1)|\lambda_i|^2\bar c
	>0.
	$
	In fact, $\min_{\lambda_i\ne0}\alpha_i=1.21>0$.
	Therefore, condition \eqref{eq:cbar_A} is satisfied for each nonzero
	eigenvalue $\lambda_i$ of $L$. In addition,
	$\gamma=0.05<\beta-1/2=0.50$, and therefore
	$\sum_{k=0}^{\infty}c(k)^2b_j(k)^2<\infty$. Thus, by Corollary~\ref{cor:dt_strong_unbounded}, the conditions for mean
	square and almost sure strong consensus are satisfied. Moreover, the sensitivity-decay parameter is
	$\alpha_H=1.40>0$, which verifies the condition in
	Theorem~\ref{thm:history_sensitivity_global}.
	Moreover, the chosen $\varrho=0.25$ satisfies
	$\varrho a_1+\gamma=5.05>1$ and
	$
	\max_{\lambda_i\ne0} h_i(\varrho)=-0.23<0,
	$
	and hence it is admissible in Lemma~\ref{lem:dt_fundamental_exp_A}.
	The privacy parameter $\underline b$ is chosen according to
	Theorem~\ref{thm:inf_history_dp}. With $M=1.00$ and $\Delta_H=0.05$, the numerical evaluation gives
	$\underline b=0.27$ and $\varepsilon(\infty)\le\epsilon^\star=4.00$.

	\begin{figure}[t]
		\hspace*{-0.1\linewidth}%
		\includegraphics[width=1.2\linewidth]{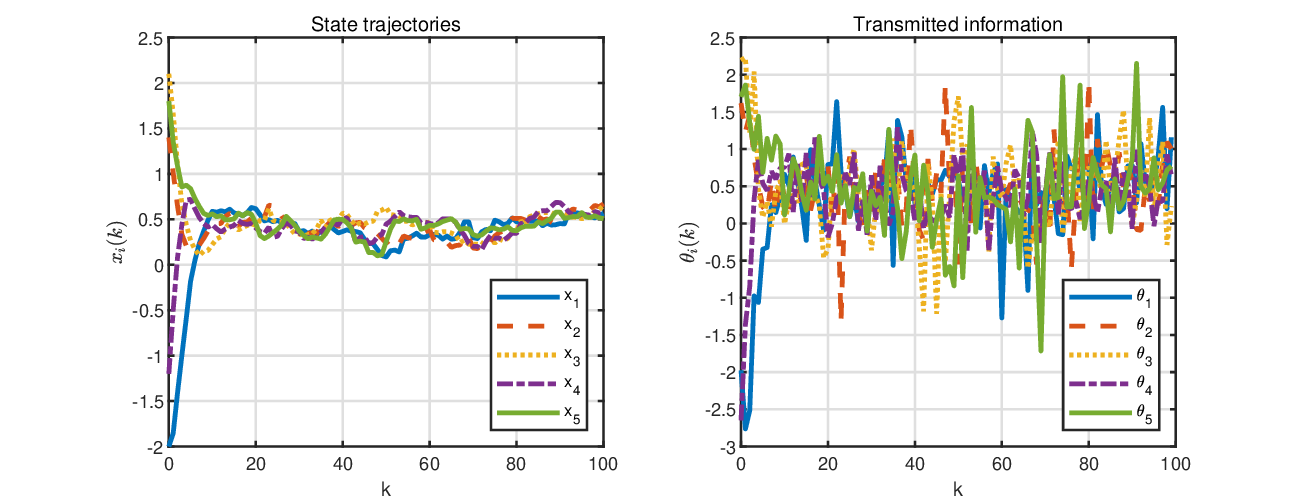}
		\caption{State trajectories and transmitted information under the proposed increasing privacy-noise mechanism.}
		\label{fig:state_theta}
	\end{figure}
	
	\begin{figure}[t]
		\hspace*{-0.1\linewidth}%
		\includegraphics[width=1.2\linewidth]{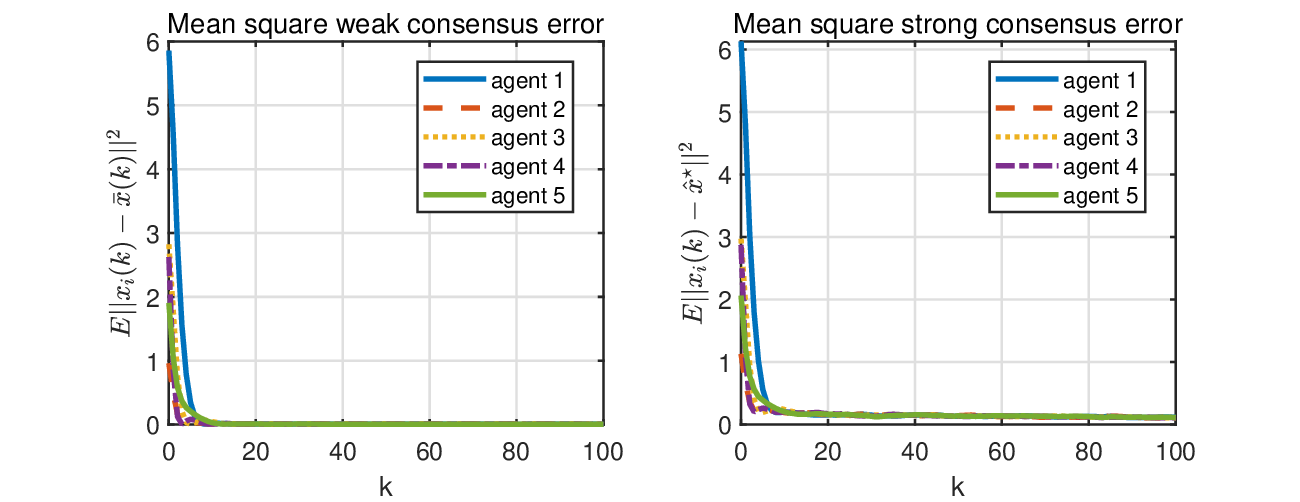}
		\caption{Mean square weak consensus errors and mean square strong consensus errors.}
		\label{fig:ms_weak_strong}
	\end{figure}
	
	\begin{figure}[t]
		\centering
		\includegraphics[width=0.7\linewidth]{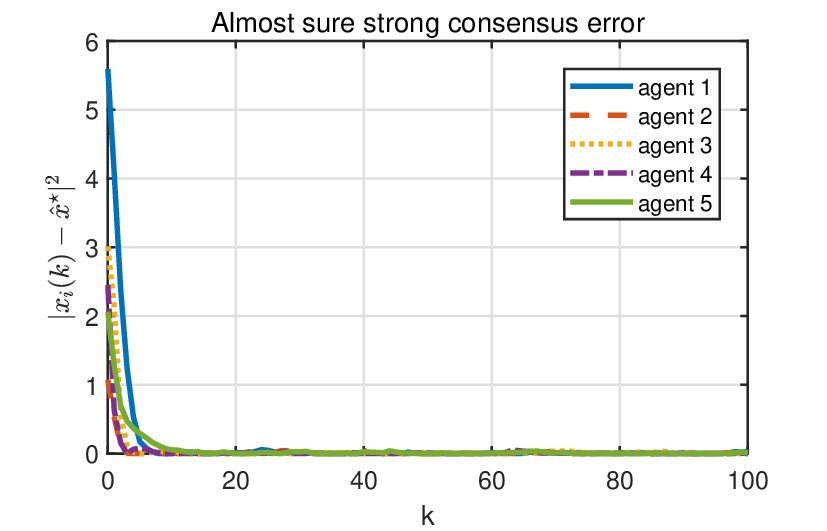}
		\caption{Almost sure consensus errors.}
		\label{fig:as}
	\end{figure}

	Fig.~\ref{fig:state_theta} shows the state trajectories and the transmitted
	information under the proposed increasing privacy-noise mechanism. It can be
	seen from the left panel that the states of all agents approach a common
	trajectory, which verifies the consensus behavior of the closed-loop system.
	The right panel plots the transmitted information
	$\theta_i(k)=x_i(k-d)+w_i(k)$. Compared with the state trajectories, the
	transmitted information exhibits persistent fluctuations due to the injected
	Laplace privacy noises. This shows that the broadcasted data remain masked
	over time, while the actual agent states still converge to consensus. 
	Figs.~\ref{fig:ms_weak_strong} and~\ref{fig:as} further verify that the system achieves mean square weak consensus, mean square strong consensus, and almost sure consensus simultaneously.
	
	These simulation results show that the proposed mechanism can protect the
	delayed initial history while preserving consensus. In particular, the
	increasing-noise design satisfies the prescribed privacy level
	$\epsilon^\star=4.00$ and keeps the transmitted information perturbed over
	time, whereas the agent states still converge to consensus.
	
	\section{Conclusion}
	\label{sec:conclusion}
	
	This paper studied differentially private consensus for discrete-time
	MASs with communication delays. The delayed initial history
	was regarded as the private data, and a corresponding adjacency relation and
	privacy mechanism were developed. Under suitable gain-noise conditions, mean
	square consensus, almost sure consensus, and infinite-horizon
	$\epsilon^\star$-differential privacy were established. 
	
	Several problems remain open for future research. First, it would be
	interesting to extend the proposed framework to time-varying or randomly
	switching communication topologies. Second, event-triggered communication
	could be incorporated to reduce communication costs while preserving
	differential privacy for delayed histories. Finally, the present work focuses
	on first-order dynamics, and extending the analysis to higher-order or
	general linear MASs with delays is another meaningful
	direction. 

	\begin{ack}                               
		The research was supported by the National Natural Science Foundations of China under Grants 62522319, 62433020, 62473347, 62261136550, and T2293770.   
	\end{ack}


\appendix
\section{Proof of Lemma 2}    
The case $d=0$ can be treated similarly with the history term absent.
Thus, we only present the proof for $d\ge1$.
Fix $t\ge0$ and set $\gamma(k):=\Gamma(k,t)$. By
\eqref{eq:Gamma_two_init}, we have
\begin{equation}\label{eq:gamma_init}
	\gamma(t)=1,\qquad
	\gamma(t-1)=\cdots=\gamma(t-d)=0.
\end{equation}
Moreover, by \eqref{eq:Gamma_two_rec},
\begin{equation}\label{eq:gamma_rec_t}
	\gamma(k+1)
	=
	\gamma(k)-\lambda c(k)\gamma(k-d),
	\qquad k\ge t.
\end{equation}
The main difficulty comes from the delayed term $\gamma(k-d)$ in
\eqref{eq:gamma_rec_t}. To rewrite the recursion in a form involving
the current term $\gamma(k)$, introduce
\begin{equation}\label{eq:x_def_t}
	q(k)
	:=
	\gamma(k)
	-
	\lambda
	\sum_{j=k-d}^{k-1}
	c(j+d)\gamma(j),
	\qquad k\ge t.
\end{equation}
Shifting the summation window gives
\begin{align}
	&\sum_{j=k+1-d}^{k}c(j+d)\gamma(j)
	\notag\\
	&=
	\sum_{j=k-d}^{k-1}c(j+d)\gamma(j)
	-c(k)\gamma(k-d)
	+c(k+d)\gamma(k).
\end{align}
Combining this identity with \eqref{eq:gamma_rec_t}, we obtain
\begin{equation}\label{eq:x_rec_t}
	q(k+1)
	=
	q(k)-\lambda c(k+d)\gamma(k),
	\qquad k\ge t.
\end{equation}
Consequently,
\begin{align}
	&|q(k+1)|^2-|q(k)|^2
	\notag\\
	&=
	-2\Re(\lambda)c(k+d)|\gamma(k)|^2
	+
	|\lambda|^2c^2(k+d)|\gamma(k)|^2
	\notag\\
	&
	+
	2\Re\!\Bigg(
	\bigg(
	\lambda
	\sum_{j=k-d}^{k-1}
	c(j+d)\gamma(j)
	\bigg)^*
	c(k+d)\lambda\gamma(k)
	\Bigg).
	\label{eq:dx2_A_t}
\end{align}

We next bound the cross term in \eqref{eq:dx2_A_t}. Define
$A(k):=\lambda\sum_{j=k-d}^{k-1}c(j+d)\gamma(j)$ and
$B(k):=c(k+d)\lambda\gamma(k)$. Then, the cross term is
$2\Re(A(k)^*B(k))$. Using
$2\Re(A^*B)\le 2|A||B|\le d^{-1}|A|^2+d|B|^2$ and the
Cauchy--Schwarz inequality
\[
\left|
\sum_{j=k-d}^{k-1}
c(j+d)\gamma(j)
\right|^2
\le
d
\sum_{j=k-d}^{k-1}
c^2(j+d)|\gamma(j)|^2,
\]
we obtain, for all $k\ge t$,
\begin{align}
	2\Re(A(k)^*B(k))
	\le\;&
	|\lambda|^2
	\sum_{j=k-d}^{k-1}
	c^2(j+d)|\gamma(j)|^2
	\notag\\
	&+
	d|\lambda|^2
	c^2(k+d)|\gamma(k)|^2.
	\label{eq:P_bound_A_t}
\end{align}

The purpose of the following Lyapunov-type construction is to absorb the
history terms in \eqref{eq:P_bound_A_t}. Define
\begin{equation}\label{eq:W_def_A_t}
	W(k)
	:=
	|\lambda|^2
	\sum_{j=k-d}^{k-1}
	(j-k+d+1)c^2(j+d)|\gamma(j)|^2,
	k\ge t,
\end{equation}
and set $V(k):=|q(k)|^2+W(k)$. A direct calculation gives
\begin{align}
	W(k+1)-W(k)
	=\;&
	|\lambda|^2
	\Bigg(
	d\,c^2(k+d)|\gamma(k)|^2
	\notag\\
	-
	\sum_{j=k-d}^{k-1}&
	c^2(j+d)|\gamma(j)|^2
	\Bigg),
	\quad k\ge t.
	\label{eq:W_diff_A_t}
\end{align}
Adding \eqref{eq:dx2_A_t} and \eqref{eq:W_diff_A_t}, and then using
\eqref{eq:P_bound_A_t}, yields
\begin{align}
	&\hspace{0.2cm}V(k+1)-V(k)\notag\\
	&\le 		-2\Re(\lambda)c(k+d)|\gamma(k)|^2+
	(2d+1)|\lambda|^2
	c^2(k+d)|\gamma(k)|^2
	\notag\\
	&=\;
	-\Big(
	2\Re(\lambda)
	-(2d+1)|\lambda|^2c(k+d)
	\Big)
	c(k+d)|\gamma(k)|^2.
	\label{eq:V_drift_raw}
\end{align}
For $k\ge\max\{t,k_0\}$, we have $c(k+d)\le\bar c$. Hence, by
\eqref{eq:cbar_A},
\begin{equation}\label{eq:V_decay_A_t}
	V(k+1)-V(k)
	\le
	-\alpha c(k+d)|\gamma(k)|^2,
	k\ge\max\{t,k_0\}.
\end{equation}

We now derive a companion upper bound for $V(k+1)$, which will be used
after introducing the exponential weight. From \eqref{eq:x_rec_t} and
\eqref{eq:x_def_t},
\[
q(k+1)
=
\gamma(k)
-
\lambda
\sum_{j=k-d}^{k-1}
c(j+d)\gamma(j)
-
\lambda c(k+d)\gamma(k).
\]
Therefore,
\[
\begin{aligned}
	|q(k+1)|^2
	\le\;&
	2|1-\lambda c(k+d)|^2|\gamma(k)|^2
	\\
	&+
	2|\lambda|^2
	\left|
	\sum_{j=k-d}^{k-1}
	c(j+d)\gamma(j)
	\right|^2 .
\end{aligned}
\]
For $k\ge\max\{t,k_0\}$, since $c(k+d)\le\bar c$, the above estimate
and the Cauchy--Schwarz inequality imply
\[
\begin{aligned}
	|q(k+1)|^2
	\le\;&
	2(1+|\lambda|\bar c)^2|\gamma(k)|^2
	\\
	&+
	2d|\lambda|^2
	\sum_{j=k-d}^{k-1}
	c^2(j+d)|\gamma(j)|^2.
\end{aligned}
\]
On the other hand, by \eqref{eq:W_def_A_t},
$W(k+1)\le d|\lambda|^2\sum_{j=k-d}^{k}c^2(j+d)|\gamma(j)|^2$.
Hence, for $k\ge\max\{t,k_0\}$, we have
\begin{equation}\label{eq:V_upper_for_swap_t}
	V(k+1)
	\le
	C_0|\gamma(k)|^2
	+
	C_1
	\sum_{j=k-d}^{k-1}
	c^2(j+d)|\gamma(j)|^2,
\end{equation}
where $C_0=2(1+|\lambda|\bar c)^2+d|\lambda|^2\bar c^2$ and
$C_1=3d|\lambda|^2$.

Next, we introduce an exponential weight. For a fixed $\rho>0$, define
$S_t(k):=\sum_{i=t}^{k-1}c(i+d)$ and
$\Phi_t(k):=e^{\rho S_t(k)}V(k)$ for $k\ge t$. For $k<t$, we adopt the
convention $S_t(k):=-\sum_{i=k}^{t-1}c(i+d)$. Then,
$S_t(k+1)=S_t(k)+c(k+d)$, and for $k\ge\max\{t,k_0\}$,
\begin{align}
	\Phi_t(k+1)-\Phi_t(k)
	=\;&
	e^{\rho S_t(k)}
	\Big(
	V(k+1)-V(k)
	\notag\\
	\quad
	+
	(e&^{\rho c(k+d)}-1)V(k+1)
	\Big).
	\label{eq:Phi_diff_fixed}
\end{align}
Using $e^u-1\le ue^u$ for $u\ge0$ and $c(k+d)\le\bar c$, we obtain
$(e^{\rho c(k+d)}-1)V(k+1)
\le
\rho e^{\rho\bar c}
c(k+d)V(k+1)$.
Together with \eqref{eq:V_decay_A_t}, this yields
\begin{align}
	\Phi_t(k+1)-\Phi_t(k)
	\le\;&
	\rho e^{\rho\bar c}
	c(k+d)e^{\rho S_t(k)}V(k+1)
	\notag\\
	-
	\alpha c(k+d)&e^{\rho S_t(k)}
	|\gamma(k)|^2.
	\label{eq:Phi_step_t}
\end{align}
Substituting \eqref{eq:V_upper_for_swap_t} into
\eqref{eq:Phi_step_t}, we get
\begin{align}
	\Phi_t(k+1)&-\Phi_t(k)
	\le\;
	\big(\rho e^{\rho\bar c}C_0-\alpha\big)
	c(k+d)e^{\rho S_t(k)}|\gamma(k)|^2
	\notag\\
	&+
	\rho e^{\rho\bar c}C_1
	c(k+d)e^{\rho S_t(k)}
	\sum_{j=k-d}^{k-1}
	c^2(j+d)|\gamma(j)|^2,
	\label{eq:Phi_step2_t}
\end{align}
for all $k\ge\max\{t,k_0\}$.

Let $m(t):=\max\{t,k_0\}$ and fix $\ell>m(t)$. Summing
\eqref{eq:Phi_step2_t} from $k=m(t)$ to $k=\ell-1$ and telescoping the
left-hand side yield
\begin{align}
	&\Phi_t(\ell)-\Phi_t(m(t))\notag\\
	&\le
	\big(\rho e^{\rho\bar c}C_0-\alpha\big)
	\sum_{k=m(t)}^{\ell-1}
	c(k+d)e^{\rho S_t(k)}|\gamma(k)|^2
	\notag\\
	&+
	\rho e^{\rho\bar c}C_1
	\sum_{k=m(t)}^{\ell-1}
	c(k+d)e^{\rho S_t(k)}
	\sum_{j=k-d}^{k-1}
	c^2(j+d)|\gamma(j)|^2.
	\label{eq:Phi_sum_before_swap_t}
\end{align}
Write
\[
D_\ell
:=
\sum_{k=m(t)}^{\ell-1}
c(k+d)e^{\rho S_t(k)}
\sum_{j=k-d}^{k-1}
c^2(j+d)|\gamma(j)|^2.
\]
Changing the order of summation and using
$j\in[k-d,k-1]\iff k\in[j+1,j+d]$, we get
\begin{align}
	D_\ell
	=&
	\sum_{j=m(t)-d}^{\ell-2}
	c^2(j+d)|\gamma(j)|^2
	\notag\\
	&\times
	\sum_{k=\max\{m(t),j+1\}}^{\min\{\ell-1,j+d\}}
	c(k+d)e^{\rho S_t(k)}.
	\label{eq:Sigma_swap_t}
\end{align}
For any $k\in[j+1,j+d]$, we have
$S_t(k)=S_t(j)+\sum_{q=j}^{k-1}c(q+d)\le S_t(j)+d\bar c$,
so that $e^{\rho S_t(k)}\le e^{\rho d\bar c}e^{\rho S_t(j)}$.
Moreover, for $k\ge m(t)\ge k_0$, we have $c(k+d)\le\bar c$. Since
the inner sum in \eqref{eq:Sigma_swap_t} contains at most $d$ terms, it
follows that
\begin{align}
	\sum_{k=\max\{m(t),j+1\}}^{\min\{\ell-1,j+d\}}
	c(k+d)e^{\rho S_t(k)}
	\le
	d\bar c e^{\rho d\bar c}
	e^{\rho S_t(j)}.
	\label{eq:inner_sum_bd_t}
\end{align}
Combining \eqref{eq:Sigma_swap_t} and \eqref{eq:inner_sum_bd_t}, we get
\begin{equation}\label{eq:Sigma_bd1_t}
	D_\ell
	\le
	d\bar c e^{\rho d\bar c}
	\sum_{j=m(t)-d}^{\ell-2}
	c^2(j+d)e^{\rho S_t(j)}|\gamma(j)|^2.
\end{equation}
Furthermore, for $j\ge m(t)-d$, we have $j+d\ge m(t)\ge k_0$, and
therefore $c(j+d)\le\bar c$. Hence
\begin{equation}\label{eq:Sigma_bd2_t}
	D_\ell
	\le
	d\bar c^2 e^{\rho d\bar c}
	\sum_{j=m(t)-d}^{\ell-2}
	c(j+d)e^{\rho S_t(j)}|\gamma(j)|^2.
\end{equation}
Substituting \eqref{eq:Sigma_bd2_t} into
\eqref{eq:Phi_sum_before_swap_t}, and enlarging the summation ranges by
nonnegativity of the summands, yields
\begin{equation}\label{eq:Phi_final_sum_t}
	\Phi_t(\ell)
	\le
	\Phi_t(m(t))
	+
	h(\rho)
	\sum_{k=m(t)-d}^{\ell-1}
	c(k+d)e^{\rho S_t(k)}|\gamma(k)|^2,
\end{equation}
where
\begin{equation}\label{eq:h_rho_def_t}
	h(\rho)
	:=
	-\alpha
	+
	\rho e^{\rho\bar c}C_0
	+
	\rho C_1d\bar c^2
	e^{\rho(d+1)\bar c}.
\end{equation}
By the choice of $\varrho$ in \eqref{eq:h_varrho_condition}, we have
$h(\varrho)<0$. Taking $\rho=\varrho$ in
\eqref{eq:Phi_final_sum_t}, and using the nonnegativity of
$\sum_{k=m(t)-d}^{\ell-1}c(k+d)e^{\varrho S_t(k)}|\gamma(k)|^2$,
we obtain
\begin{equation}\label{eq:Phi_monotone_t}
	\Phi_t(\ell)\le\Phi_t(m(t)),
	\qquad \forall\,\ell\ge m(t).
\end{equation}

It remains to control the finite prefix when $t<k_0$. If $t\ge k_0$,
then $m(t)=t$, and by \eqref{eq:gamma_init} and \eqref{eq:x_def_t},
$\Phi_t(k)\le \Phi_t(t)=V(t)=1$ for $k\ge t$.
If $t<k_0$, then $m(t)=k_0$, and \eqref{eq:Phi_monotone_t} gives
$\Phi_t(k)\le \Phi_t(k_0)=e^{\varrho S_t(k_0)}V(k_0)$ for $k\ge k_0$.
The quantity $e^{\varrho S_t(k_0)}V(k_0)$ depends only on finitely many
values of $\gamma$ on $[t-d,k_0]$ and the finite prefix
$\{c(0),\dots,c(k_0+d)\}$, and is therefore uniformly bounded over all
$t<k_0$. Consequently, there exists a constant $M\ge1$, depending only
on $(\lambda,d,\bar c)$, the chosen $\varrho$, and the finite initial
segment $\{c(0),\dots,c(k_0+d)\}$, such that
\begin{equation}\label{eq:Phi_prefix_bound_t}
	\Phi_t(k)\le M,
	\qquad \forall\,k\ge m(t),\ \forall\,t\ge0.
\end{equation}
Enlarging $M$ if necessary to cover the finitely many indices
$t\le k<m(t)$, we obtain
\begin{equation}\label{eq:V_exp_bd_shifted}
	V(k)
	\le
	M\exp\!\left(
	-\varrho\sum_{i=t}^{k-1}c(i+d)
	\right),
	\qquad \forall\,k\ge t.
\end{equation}

Next, since $c(i)\ge0$,
\[
\sum_{i=t}^{k-1}c(i+d)
=
\sum_{i=t+d}^{k+d-1}c(i)
\ge
\sum_{i=t}^{k-1}c(i)
-
\sum_{i=t}^{t+d-1}c(i).
\]
If $t\ge k_0$, then $\sum_{i=t}^{t+d-1}c(i)\le d\bar c$, and therefore
\[
\begin{aligned}
	\exp\!\left(
	-\varrho\sum_{i=t}^{k-1}c(i+d)
	\right)\le
	e^{\varrho d\bar c}
	\exp\!\left(
	-\varrho\sum_{i=t}^{k-1}c(i)
	\right).
\end{aligned}
\]
For the finitely many cases $t<k_0$, the same estimate remains valid
after enlarging the constant. Hence there exists $M_1>0$ such that
\begin{equation}\label{eq:V_exp_bd}
	V(k)
	\le
	M_1
	\exp\!\left(
	-\varrho\sum_{i=t}^{k-1}c(i)
	\right),
	\qquad \forall\,k\ge t.
\end{equation}

Finally, \eqref{eq:x_def_t} gives
$\gamma(k)=q(k)+\lambda\sum_{j=k-d}^{k-1}c(j+d)\gamma(j)$.
Using $|a+b|^2\le2|a|^2+2|b|^2$ and the Cauchy--Schwarz inequality, we
obtain
\[
|\gamma(k)|^2
\le
2|q(k)|^2
+
2d|\lambda|^2
\sum_{j=k-d}^{k-1}
c^2(j+d)|\gamma(j)|^2.
\]
Since the weights in \eqref{eq:W_def_A_t} satisfy $j-k+d+1\ge1$, it
follows that
$|\lambda|^2\sum_{j=k-d}^{k-1}c^2(j+d)|\gamma(j)|^2\le W(k)$.
Hence $|\gamma(k)|^2\le2|q(k)|^2+2dW(k)\le2(d+1)V(k)$ for
$k\ge t$. Combining this estimate with \eqref{eq:V_exp_bd}, we arrive at
\[
|\Gamma(k,t)|^2
=
|\gamma(k)|^2
\le
2(d+1)M_1
\exp\!\left(
-\varrho\sum_{i=t}^{k-1}c(i)
\right).
\]
This proves \eqref{eq:Gamma_exp_A} with $C:=2(d+1)M_1$.

It remains to derive the estimate for the history fundamental solution
$\Gamma_h(k,r)$. Fix $r\in\{-d,\dots,0\}$ and set
$\gamma_r(k):=\Gamma_h(k,r)$. Then, $\gamma_r(k)$ satisfies the same
homogeneous delayed recursion
$\gamma_r(k+1)=\gamma_r(k)-\lambda c(k)\gamma_r(k-d)$, $k\ge0$,
with initial history
\[
\gamma_r(k)=
\begin{cases}
	1, & k=r,\\
	0, & k\neq r,
\end{cases}
\qquad -d\le k\le0.
\]
Applying the preceding weighted Lyapunov argument with the same
$\varrho$ chosen in \eqref{eq:h_varrho_condition}, we obtain, for each
$r\in\{-d,\dots,0\}$, a constant $C_r>0$ such that
\[
|\Gamma_h(k,r)|^2
\le
C_r
\exp\!\left(
-\varrho\sum_{i=0}^{k-1}c(i)
\right),
\qquad k\ge0.
\]
Since the set $\{-d,\dots,0\}$ is finite, letting
$C_h:=\max_{-d\le r\le0}C_r$ gives
\[
|\Gamma_h(k,r)|^2
\le
C_h
\exp\!\left(
-\varrho\sum_{i=0}^{k-1}c(i)
\right),
k\ge0, r\in\{-d,\dots,0\}.
\]
This proves \eqref{eq:Gammah_exp_A}.

\section{Proof of Lemma 3}      
By Lemma~\ref{lem:dt_fundamental_exp_A}, under \eqref{eq:cbar_A}, there exist constants $C>0$ and $\varrho>0$ such that
\begin{equation}\label{eq:Gamma_exp_for_kernel}
	|\Gamma(k,s+1)|^2
	\le
	C\exp\!\Bigl(
	-\varrho\sum_{r=s+1}^{k-1}c(r)
	\Bigr),
	\qquad \forall\, k>s\ge 0.
\end{equation}
Hence, letting $\mu:=\varrho/2$, we obtain
\begin{equation}\label{eq:Gamma_exp_sqrt_for_kernel}
	|\Gamma(k,s+1)|
	\le
	\sqrt{C}\exp\!\Bigl(
	-\mu\sum_{r=s+1}^{k-1}c(r)
	\Bigr),
	\qquad \forall\, k>s\ge 0.
\end{equation}

We first prove that $a_{k,s}\to0$ for each fixed $s$. Indeed, for fixed $s\ge0$,
\[
a_{k,s}
=
|\Gamma(k,s+1)|\,c(s)
\le
\sqrt{C}\,c(s)\,
\exp\!\Bigl(
-\mu\sum_{r=s+1}^{k-1}c(r)
\Bigr).
\]
Since $\sum_{k=0}^{\infty} c(k) = \infty$ implies
$\sum_{r=s+1}^{k-1}c(r)\to\infty$ as $k\to\infty$,
it follows that $a_{k,s}\to0$ for each fixed $s\ge0$.

We next prove that the row sums are uniformly bounded. By \eqref{eq:cbar_A}, there exists an integer $k_0\ge0$ such that
$\bar c:=\sup_{k\ge k_0}c(k)<\infty$.
For $k\ge1$, write
\[
\sum_{s=0}^{k-1}a_{k,s}
=
\sum_{s=0}^{k_0-1}a_{k,s}
+
\sum_{s=k_0}^{k-1}a_{k,s},
\]
where the first sum is understood as zero if $k_0=0$ or $k\le k_0$.

For the finite prefix, by \eqref{eq:Gamma_exp_sqrt_for_kernel},
$\sum_{s=0}^{k_0-1}a_{k,s}
\le
\sqrt{C}\sum_{s=0}^{k_0-1}c(s)
=:M_0<\infty$.

For the tail part, define
$B_s:=\exp\!\Bigl(
-\mu\sum_{r=s+1}^{k-1}c(r)
\Bigr)$ and
$A_s:=\exp\!\Bigl(
-\mu\sum_{r=s}^{k-1}c(r)
\Bigr)
=
B_s e^{-\mu c(s)}$.
Then, for $s\ge k_0$,
$B_s-A_s
=
B_s\bigl(1-e^{-\mu c(s)}\bigr)$.
Using the elementary inequality
$1-e^{-u}\ge u/(1+u)$, $u\ge0$,
and $c(s)\le \bar c$, we get
\[
B_s-A_s
\ge
B_s\frac{\mu c(s)}{1+\mu c(s)}
\ge
B_s\frac{\mu c(s)}{1+\mu\bar c}.
\]
Therefore,
$B_s c(s)\le \frac{1+\mu\bar c}{\mu}(B_s-A_s)$.
Combining this with \eqref{eq:Gamma_exp_sqrt_for_kernel}, we obtain
\[
a_{k,s}
\le
\sqrt{C}\,B_s c(s)
\le
\sqrt{C}\frac{1+\mu\bar c}{\mu}(B_s-A_s),
\qquad s\ge k_0.
\]
Summing from $s=k_0$ to $k-1$ yields
\begin{align*}
	\sum_{s=k_0}^{k-1}a_{k,s}
	&\le
	\sqrt{C}\frac{1+\mu\bar c}{\mu}
	\sum_{s=k_0}^{k-1}(B_s-A_s) \\
	&=
	\sqrt{C}\frac{1+\mu\bar c}{\mu}
	\bigl(B_{k-1}-A_{k_0}\bigr) \\
	&\le
	\sqrt{C}\frac{1+\mu\bar c}{\mu},
\end{align*}
since $B_{k-1}=1$ and $A_{k_0}\ge0$.

Consequently,
$\sup_{k\ge1}\sum_{s=0}^{k-1}a_{k,s}
\le
M_0+\sqrt{C}\frac{1+\mu\bar c}{\mu}
<\infty$.

Finally, let $\{u(s)\}$ be any deterministic sequence such that $u(s)\to0$. Since every convergent sequence is bounded, there exists $M_u>0$ such that $|u(s)|\le M_u$ for all $s$. Then, we have
\[
\Bigl|
\sum_{s=0}^{k-1}\Gamma(k,s+1)c(s)u(s)
\Bigr|
\le
\sum_{s=0}^{k-1}a_{k,s}|u(s)|.
\]
Because $a_{k,s}\to0$ for each fixed $s$ and $\sup_k\sum_{s=0}^{k-1}a_{k,s}<\infty$, the Toeplitz Lemma (\cite{Knopp:51}) applies and gives
$\sum_{s=0}^{k-1}\Gamma(k,s+1)c(s)u(s)\to0$.
This completes the proof.

\end{document}